\newtheorem{proposition}{Proposition}
\theoremstyle{definition}
\newtheorem{example}{Example}
\newcommand{\C}{\mathbb C} %complex
\newcommand{\mo}[1]{\left| #1 \right|} %modulus
\newcommand{\hi}{\mathcal{H}} %Hilbert space H
\newcommand{\hik}{\mathcal{K}} %Hilbert space K
\newcommand{\K}{\mathcal{K}} %Hilbert space
\newcommand{\lh}{\mathcal{L(H)}} %bounded linear operators on H
\newcommand{\lk}{\mathcal{L(K)}} %bounded linear operators on K
\newcommand{\trh}{\mathcal{T(H)}} %trace class operators 
\newcommand{\trhk}{\mathcal{T}(\mathcal{H}\otimes \mathcal{K})} %trace class operators H\otimesK
\newcommand{\sh}{\mathcal{S(H)}} %states in H
\newcommand{\ph}{\mathcal{P(H)}} %projections in H
\newcommand{\sk}{\mathcal{S(K)}} %states in K
\newcommand{\ip}[2]{\left\langle\,#1\,|\,#2\,\right\rangle} %inner product
\newcommand{\tr}[1]{\textrm{tr}\left[#1\right]} %trace
\newcommand{\ptr}[1]{\textrm{tr}_{\mathcal{K}}[#1]} %partial trace over K
\newcommand{\hptr}[1]{\textrm{tr}_{\mathcal{H}}[#1]} %partial trace over H
\newcommand{\id}{\mathbbm{1}} %identity operator
\newcommand{\mc}[1]{\mathcal{#1}} %mathcal 
\newcommand{\Ao}{\mathsf{A}}%generic observable
\newcommand{\Eo}{\mathsf{E}}%generic observable
\newcommand{\Fo}{\mathsf{F}}%generic observable
\newcommand{\So}{\mathsf{S}}%sharp observable
\newcommand{\Zo}{\mathsf{Z}}%sharp observable
\newcommand{\Ec}{\mathcal{E}}%generic channel
\newcommand{\Fc}{\mathcal{F}}%generic channel
\newcommand{\Uc}{\mathcal{U}} %unitary channel
\newcommand{\Vc}{\mathcal{V}} %interaction
\newcommand{\R}{\mathbb{R}}
\begin{document}

\title[Notes on Deterministic Programming of Quantum...]{Notes on Deterministic Programming of Quantum Observables and Channels}

\author[Heinosaari]{Teiko Heinosaari}
\author[Tukiainen]{Mikko Tukiainen}
\address{\textbf{Teiko Heinosaari}; Turku Centre for Quantum Physics, Department of Physics and Astronomy, University of Turku, Finland}
\email{teiko.heinosaari@utu.fi}
\address{\textbf{Mikko Tukiainen}; Turku Centre for Quantum Physics, Department of Physics and Astronomy, University of Turku, Finland}
\email{mikko.tukiainen@utu.fi}

\begin{abstract}
We study the limitations of deterministic programmability of quantum circuits, e.g.,\ quantum computer. More precisely, we analyse the programming of quantum observables and channels via quantum multimeters. We show that the programming vectors for any two different sharp observables are necessarily orthogonal, whenever post-processing is not allowed. This result then directly implies that also any two different unitary channels require orthogonal programming vectors. This approach generalizes the well-known orthogonality result first proven by Nielsen and Chuang. In addition, we give size-bounds for a multimeter to be efficient in quantum programming.
\end{abstract}

\maketitle

%%%%%%%%%%%%%%%%%%%%%%%%%%
\section{Introduction}\label{sec:intro}
%%%%%%%%%%%%%%%%%%%%%%%%%%

A common computer consists of a fixed number of gates that process classical information by performing classical operations on input data, strings of bits. 
A single computer can be used for many different tasks simply by changing its program. 
Indeed, common computer can be programmed to perform arbitrary classical manipulations on the input data, although the execution time depends on its computational speed.
This \emph{universality}, making it a very versatile device, is arguably the most important feature of a computer.

A quantum computer presumably consists of \emph{quantum gates} that are used to implement desired quantum operations on a quantum input, for instance \emph{qubits}. It's suspected, that large scale quantum computers should surpass their classical counterparts with superior speed and hence revolutionize the world by opening for study a whole class of computational tasks that ordinary computers simply cannot grasp efficiently, such as factorization of large integers and simulation of quantum systems.
However, there is, besides the practical issues of actually building one, a major theoretical problem with quantum computer, namely its universality. 

Contrary to classical computing, where it is possible to design a universal gate array that can be \emph{deterministically programmed} to implement any arbitrary function on a given classical input data, in quantum case no such universality holds. 
Indeed, it was proven by Nielsen and Chuang in \cite{NiCh97}, that to program a quantum gate array to realize any two different \emph{unitary channels}, being the mathematical representatives of quantum gates, requires orthogonal programming vectors. A cardinality argument leads to impossibility of implementing all unitary channels even in an infinite dimensional separable Hilbert space. The existence of imperfect, that is probabilistic, universally programmable quantum gate arrays has, however, been confirmed and studied in \cite{NiCh97,HiBuZi01,HiZiBu06}. 

Another useful device would be a \emph{quantum multimeter} that could be programmed to measure any quantum observable. 
Also this universally programmable device is feasible only if some error is accepted or the implementation is probabilistic \cite{DuBu02,FiDuFi02,DaPe05,PerezGarcia06}. 
It turns out  that the reason for this is very similar to the impossibility result on unitary channels: the programming of a multimeter to realize any two different \emph{sharp observables} demands orthogonal programming vectors. 
This orthogonality result can however be alleviated -- so that different sharp observables can be programmed with non-orthogonal programming vectors -- assuming that classical \emph{post-processing} of the measurement statistics is allowed \cite{ZiBu05}.

As we will show, all the previously mentioned no-go results have natural formulations in the context of quantum theory of measurement. 
This wider framework also reveals their inherent connections and opens up some immediate generalizations. 

Our study is organised as follows.
In section \ref{sec:prelim} we start by introducing some preliminary concepts and mathematical machinery. 
In section \ref{sec:prog} we study the limitations of deterministic programmability of quantum channels and observables. 
We start by first studying the programmability of sharp observables and show that the programming vectors for any two different sharp observables are necessarily orthogonal, whenever post-processing is not allowed. 
We show that this result directly implies that also two different unitary channels require orthogonal programming vectors, giving an alternative proof for the well-known result first proven in \cite{NiCh97}, yet in a more general context.
We also note further limitations: the programming of a sharp observable and an \emph{extreme observable} require orthogonal programming vectors. The same result holds also for a unitary and an extreme channel. In fact, the programming limitations of sharp observables and unitary channels are somewhat analogous: a difference arises only when post-processing of the measurement statistics is allowed.
In the final section \ref{sec:eff} we develop the concept of \emph{effectiveness} for deterministic programming by giving bounds for a programming protocol to be efficient. 

%%%%%%%%%%%%%%%%%%%%%%%%%%%%%%%%%%%%
\section{Preliminaries}\label{sec:prelim}
%%%%%%%%%%%%%%%%%%%%%%%%%%%%%%%%%%%%

In this section we fix notations and recall the basic concepts needed in our investigation.
Let $\hi$ be a complex separable Hilbert space with either finite or countably infinite dimension. 
We denote by $\lh$ the set of \emph{bounded linear operators}, by $\trh$ the set of \emph{trace class operators} and by $\ph$ the set of \emph{projections} on $\hi$. We identify the set of \emph{quantum states} as $\sh = \{\rho \in \trh \, | \, \rho \geq 0, \tr{\rho}=1\}$. In particular a state $\rho$ is called \emph{pure}, or equivalently a \emph{vector state}, if $\rho \in \sh \cap \ph$, that is $\rho = P[\varphi] = |\varphi\rangle\langle\varphi|$ for some unit vector $\varphi \in \hi$.

%%%%%%%%%%%%%%%%%%%%%%%%%%%%%%%%%%%%
\subsection{Observables}
%%%%%%%%%%%%%%%%%%%%%%%%%%%%%%%%%%%%

Let $\Omega$ be a nonempty set and $\Sigma \subset 2^\Omega$ a $\sigma$-algebra. 
We denote by $\mc{O}$ the set of quantum \emph{observables} on $(\Omega,\Sigma)$, i.e., the mappings $\Eo: \Sigma \rightarrow \lh$ that are positive, $\sigma$-additive and normalized so that $\Eo(\Omega)=\id_\hi$. 
The operators $\Eo(X)$, $X\in \Sigma$, in the range of an observable $\Eo$ are called \emph{effects}. 
If $\Eo(\{x_i\})\neq 0$ for some finite number of points $x_1,\ldots,x_N$ and $\sum_{i=1}^N \Eo(\{x_i\})=\id_\hi$, then we say that $\Eo$ has $N$-outcomes, or is \emph{$N$-valued}.
We often denote $\Eo(x) \equiv \Eo(\{x\})$ when there is no risk of confusion.

A special class of observables are those whose range consists of projections. 
We say that an observable $\Ao$ is \emph{sharp} if all its effects are projections, that is, $\Ao(X)^2 = \Ao(X)$ for every $X \in \Sigma$.

Every observable $\Eo$ has a \emph{Naimark dilation} into a sharp observable, i.e., there exist a Hilbert space $\hik$, a sharp observable $\Ao:\Sigma\to\lk$ and an isometric linear map $W:\hi\to\hik$ such that 
\begin{equation}
\Eo(X) = W^* \Ao(X) W
\end{equation}
for all $X\in\Sigma$; see e.g. \cite{CBMOA03}.
Since $W$ is an isometric linear map, $W^*W=\id_\hi$ and $WW^*$ is the projection onto $W(\hi)\subset\hik$.

We recall the following characterisation of sharp observables; see e.g.  \cite{LaYl04} for a proof.

\begin{proposition}\label{prel:sharp}
Let $\Eo$ be an observable and $(\hik,\Ao,W)$ its Naimark dilation. 
The following are equivalent:
\begin{itemize}
\item[(i)] $\Eo$ is a sharp observable.
\item[(ii)] $\Eo(X\cap Y) = \Eo(X)\Eo(Y)$ for all $X,Y\in\Sigma$.
\item[(iii)] $[\Ao(X),WW^*]=0$ for all $X\in\Sigma$. 
\end{itemize}
\end{proposition}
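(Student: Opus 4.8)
The plan is to prove the three-way equivalence by establishing $(i)\Leftrightarrow(ii)$ first and then $(i)\Leftrightarrow(iii)$, using the Naimark dilation $\Eo(X)=W^*\Ao(X)W$ throughout and the fact that $\Ao$ is a genuine projection-valued measure, so $\Ao(X\cap Y)=\Ao(X)\Ao(Y)$ for all $X,Y\in\Sigma$.

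For $(ii)\Rightarrow(i)$ I would simply put $Y=X$: then $\Eo(X)=\Eo(X\cap X)=\Eo(X)\Eo(X)=\Eo(X)^2$, and since $\Eo(X)$ is a positive (hence self-adjoint) operator, $\Eo(X)^2=\Eo(X)$ forces $\Eo(X)$ to be a projection; as this holds for every $X\in\Sigma$, $\Eo$ is sharp. The converse $(i)\Rightarrow(ii)$ is the less trivial direction: here I would use that for a sharp observable the effects $\Eo(X)$ are mutually commuting projections. One clean way is to invoke additivity: for disjoint $X,Y$ the operator $\Eo(X)+\Eo(Y)=\Eo(X\cup Y)$ must itself be a projection, and expanding $(\Eo(X)+\Eo(Y))^2=\Eo(X)+\Eo(Y)$ yields $\Eo(X)\Eo(Y)+\Eo(Y)\Eo(X)=0$; multiplying on the left by $\Eo(X)$ and using self-adjointness of the effects then gives $\Eo(X)\Eo(Y)=0=\Eo(X\cap Y)$. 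For general $X,Y$ one writes $X=(X\setminus Y)\sqcup(X\cap Y)$ and $Y=(Y\setminus X)\sqcup(X\cap Y)$, and since all the effects in sight commute (being projections with the pairwise-orthogonality relations just derived among the disjoint pieces), the product $\Eo(X)\Eo(Y)$ collapses to $\Eo(X\cap Y)^2=\Eo(X\cap Y)$.

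For $(i)\Leftrightarrow(iii)$ the key identity is $\Eo(X)=W^*\Ao(X)W$ together with $W^*W=\id_\hi$ and $WW^*=P$, the projection onto $W(\hi)$. For $(iii)\Rightarrow(i)$: if $[\Ao(X),P]=0$ then $\Ao(X)$ leaves $W(\hi)$ invariant, so $\Ao(X)P=P\Ao(X)P$, and hence
\begin{align*}
\Eo(X)^2 &= W^*\Ao(X)WW^*\Ao(X)W = W^*\Ao(X)P\Ao(X)W \\
&= W^*P\Ao(X)\Ao(X)PW = W^*\Ao(X)^2W = W^*\Ao(X)W = \Eo(X),
\end{align*}
using $\Ao(X)^2=\Ao(X)$ and $PW=WW^*W=W$. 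For the converse $(i)\Rightarrow(iii)$, I would use the already-established equivalence with $(ii)$: sharpness gives $W^*\Ao(X\cap Y)W=\Eo(X\cap Y)=\Eo(X)\Eo(Y)=W^*\Ao(X)WW^*\Ao(Y)W=W^*\Ao(X)P\Ao(Y)W$. Since $\Ao$ is projection-valued, $\Ao(X\cap Y)=\Ao(X)\Ao(Y)$, so $W^*\Ao(X)\bigl(\id-P\bigr)\Ao(Y)W=0$ for all $X,Y$. Taking $Y=X$ and using that $\id-P\geq 0$ shows $(\id-P)\Ao(X)W=0$, i.e. $\Ao(X)W(\hi)\subset W(\hi)$; applying this to the complement $\Omega\setminus X$ and using $\Ao(\Omega)=\id$ gives invariance of $W(\hi)$ under $\id-\Ao(X)$ as well, hence $\Ao(X)$ commutes with $P=WW^*$.

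The main obstacle is the $(i)\Rightarrow(iii)$ step: extracting the commutation $[\Ao(X),WW^*]=0$ from the mere sharpness of the compressed observable requires noticing that the positivity of $\id-WW^*$ upgrades the scalar-type identity $W^*\Ao(X)(\id-WW^*)\Ao(X)W=0$ into the operator statement $(\id-WW^*)\Ao(X)W=0$ (via $\|(\id-WW^*)^{1/2}\Ao(X)W\xi\|^2=0$ for all $\xi$), and then that one must run the argument for $X$ and its complement together to pin down two-sided invariance. Everything else is routine manipulation with projections and isometries; since the paper refers to \cite{LaYl04} for the proof, I would state this as the natural self-contained argument and keep the computations terse.
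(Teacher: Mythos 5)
The paper gives no proof of this proposition — it is quoted from \cite{LaYl04} — so your argument has to stand on its own, and it essentially does. The logical cycle (ii)$\Rightarrow$(i), (i)$\Rightarrow$(ii), (iii)$\Rightarrow$(i), (i)$\Rightarrow$(ii)$\Rightarrow$(iii) closes; the orthogonality lemma for two projections summing to a projection is the standard one (left-multiplying the anticommutator relation and taking adjoints kills both the commutator and the anticommutator, hence the product); and the key analytic step in (i)$\Rightarrow$(iii) — upgrading the sandwiched identity $W^*\Ao(X)(\id-WW^*)\Ao(X)W=0$ to the operator identity $(\id-WW^*)\Ao(X)W=0$ by writing $\id-WW^*$ as a square — is exactly the right move.

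One justification at the very end is off, though the conclusion survives. Having obtained $(\id-P)\Ao(X)W=0$ with $P=WW^*$, i.e.\ $\Ao(X)P=P\Ao(X)P$, you claim that running the argument for $\Omega\setminus X$ yields invariance of $W(\hi)$ under $\id_\hik-\Ao(X)$ and hence commutation. That step is vacuous: invariance of $W(\hi)$ under $\id_\hik-\Ao(X)$ already follows trivially from its invariance under $\Ao(X)$, and indeed $(\id-P)\Ao(\Omega\setminus X)W=-(\id-P)\Ao(X)W$ because $(\id-P)W=0$, so the complement of the \emph{set} gives no new information. What commutation actually requires is invariance of the orthogonal \emph{subspace} $W(\hi)^\perp$, and that comes for free from self-adjointness: taking adjoints of $\Ao(X)P=P\Ao(X)P$ gives $P\Ao(X)=P\Ao(X)P=\Ao(X)P$, which is (iii). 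With that one-line replacement your proof is complete and correct.
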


The set $\mc{O}$ is a convex set, and a convex combination $\lambda \Eo + (1-\lambda) \Fo$ of two observables $\Eo$ and $\Fo$ corresponds to their mixing. 
The extremal elements of $\mc{O}$, i.e., the observables that cannot be expressed as convex mixtures of other observables, are called \emph{extreme observables}.
All sharp observables are extreme, but it is known that there are also other extreme observables \cite{PSAQT82}.
For instance, in a finite $d$-dimensional Hilbert space we can construct an extreme observable with $N$ outcomes for each $N=d,\ldots,d^2$ \cite{HaHePe12}, but any sharp observable has at most $d$ outcomes.
A complete characterisation of extreme observables is given in \cite{Pellonpaa11}.

%%%%%%%%%%%%%%%%%%%%%%%%%%%%%%%%%%%%
\subsection{Channels}
%%%%%%%%%%%%%%%%%%%%%%%%%%%%%%%%%%%%

Quantum channels are devices performing quantum state transformations. 
Mathematically, a linear mapping $\Ec: \trh \rightarrow \trh$ is a quantum channel if it is \emph{completely positive} and \emph{trace-preserving} (Schr\"{o}dinger picture), or equivalently, if its dual map $\Ec^*: \lh \rightarrow \lh$ is completely positive and \emph{unital}, meaning that $\Ec^*(\id_\hi)=\id_\hi$ (Heisenberg picture). 

A unitary operator $U:\hi\to\hi$ defines a \emph{unitary channel} $\Uc$ by $\Uc (T) = UT U^*$ for all $T \in \trh$, or $\Uc^* (B) = U^* B U$ for all $B \in \lh$ in the Heisenberg picture.
The unitary channels are exactly the \emph{reversible channels}, i.e., those channels that have an inverse map which is a channel.
One of the important properties of unitary channels is that they are \emph{multiplicative}, namely
\begin{equation}\label{eq:unitary}
\Uc^* (BC) = U^* B U \, U^* C U = \Uc^* (B)\Uc^* (C)
\end{equation} 
for all $B,C \in \lh$. 

Every channel $\Ec$ has a \emph{Stinespring dilation}, i.e., there exist a Hilbert space $\hik$ and an isometric linear map $W:\hi\to\hi\otimes\hik$ such that 
\begin{equation}\label{prel:stines}
\Ec(T) = \ptr{WTW^*}
\end{equation}
for all $T\in\trh$, or equivalently
\begin{equation}
\Ec^*(B) = W^* B \otimes \id_\hik W
\end{equation}
for all $B\in\lh$; see e.g. \cite{CBMOA03}.

The following properties of unitary channels will be needed later.

\begin{proposition}\label{prel:unitary}
Let $\Uc^*:\lh\rightarrow\lh$ be a unitary channel and $(\hik,W)$ its Stinespring dilation. 
Then $\Uc^*$ satisfies the following equivalent conditions:
\begin{itemize}
\item[(i)] $\Uc^*$ is multiplicative.
\item[(ii)] $\Uc^*(P)$ is a projection for each projection $P\in\ph$.
\item[(iii)]  $[B\otimes \id_\hik, W W^*]=0$ for all $B\in \lh$.
\end{itemize}
\end{proposition}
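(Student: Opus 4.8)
The plan is to prove the cyclic chain of implications $(i)\Rightarrow(ii)\Rightarrow(iii)\Rightarrow(i)$; since a unitary channel is multiplicative by \eqref{eq:unitary}, establishing these equivalences simultaneously shows that a unitary channel satisfies all three conditions. Throughout I write $\Pi:=WW^*$ for the projection onto the closed subspace $W(\hi)\subseteq\hi\otimes\hik$ (closed because $W$ is isometric), and I use repeatedly that $W^*W=\id_\hi$, hence $\Pi W=W$, and that $\Uc^*$, being positive, is $*$-preserving.

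The implications $(i)\Rightarrow(ii)$ and $(iii)\Rightarrow(i)$ are short computations with the Stinespring isometry. For $(i)\Rightarrow(ii)$, if $P\in\ph$ then multiplicativity gives $\Uc^*(P)^2=\Uc^*(P^2)=\Uc^*(P)$ while $\Uc^*(P)^*=\Uc^*(P^*)=\Uc^*(P)$, so $\Uc^*(P)$ is a projection. For $(iii)\Rightarrow(i)$, given $B,C\in\lh$ one slides $\Pi$ past $C\otimes\id_\hik$ using $(iii)$ and then applies $\Pi W=W$:
\[
\Uc^*(B)\Uc^*(C)=W^*(B\otimes\id_\hik)\,\Pi\,(C\otimes\id_\hik)W=W^*(B\otimes\id_\hik)(C\otimes\id_\hik)W=W^*(BC\otimes\id_\hik)W=\Uc^*(BC).
\]

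The substantive step is $(ii)\Rightarrow(iii)$. Fix $P\in\ph$ and write $\Uc^*(P)=W^*(P\otimes\id_\hik)W$. Using $W^*W=\id_\hi$ and $(P\otimes\id_\hik)^2=P\otimes\id_\hik$, the idempotency $\Uc^*(P)^2=\Uc^*(P)$ rearranges to $W^*(P\otimes\id_\hik)(\id-\Pi)(P\otimes\id_\hik)W=0$; since $\id-\Pi$ is a projection this is $\no{(\id-\Pi)(P\otimes\id_\hik)W\psi}^2=0$ for all $\psi\in\hi$, i.e.\ $(P\otimes\id_\hik)W(\hi)\subseteq W(\hi)$. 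By the spectral theorem every $B\in\lh$ is a norm limit of finite linear combinations of projections, and since $W(\hi)$ is a closed subspace the invariance $(B\otimes\id_\hik)W(\hi)\subseteq W(\hi)$ then holds for every $B\in\lh$. Finally, the family $\{B\otimes\id_\hik:B\in\lh\}$ is self-adjoint, so any subspace invariant under it is reducing; hence $\Pi=WW^*$ commutes with every $B\otimes\id_\hik$, which is exactly $(iii)$.

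I expect $(ii)\Rightarrow(iii)$ to be the main obstacle. The key insight is that the idempotency of $\Uc^*(P)$ is a Pythagoras-type identity forcing $W(\hi)$ to be invariant under $P\otimes\id_\hik$; after that one must carefully bootstrap from projections to all of $\lh$, which is where norm density of the span of projections together with the closedness of $W(\hi)$ enter. The other two implications require only routine manipulations of $W$, $W^*$ and $\Pi$.
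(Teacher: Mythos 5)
Your proof is correct, but the key implication (ii)$\Rightarrow$(iii) is argued along a genuinely different route than in the paper. The paper observes that idempotency of $\Uc^*(P)=W^*(P\otimes\id_\hik)W$ forces the self-adjoint operator $WW^*(P\otimes\id_\hik)WW^*$ to be a projection, and then invokes an external lemma (\cite[Lemma 2.2.1]{LaYl04}: if $Q,R$ are projections and $QRQ$ is a projection, then $[Q,R]=0$) to get $[WW^*,P\otimes\id_\hik]=0$ for each projection $P$, finally extending to all $B\in\lh$ by linearity and norm continuity of $B\mapsto[WW^*,B\otimes\id_\hik]$. You instead rewrite idempotency as the vanishing of $W^*(P\otimes\id_\hik)(\id-WW^*)(P\otimes\id_\hik)W$, read this as $\no{(\id-WW^*)(P\otimes\id_\hik)W\psi}=0$, and deduce that $W(\hi)$ is invariant under each $P\otimes\id_\hik$; you then extend the invariance (rather than the commutator) over the norm-dense span of projections and conclude via the standard fact that a closed subspace invariant under a self-adjoint family is reducing. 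Your version is self-contained --- it effectively re-proves the special case of the cited projection lemma that is needed --- at the cost of a slightly longer argument; both versions ultimately rest on the same density fact, namely that the norm-closed linear span of projections is all of $\lh$. The remaining implications (i)$\Rightarrow$(ii) and (iii)$\Rightarrow$(i), which the paper declares straightforward, are carried out by you exactly as one would expect, and your opening remark correctly disposes of the claim that a unitary channel satisfies the three conditions by appealing to its multiplicativity.
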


\begin{proof}
We have already seen that $\Uc^*$ is multiplicative, so we only need to prove that the conditions (i)--(iii) are equivalent.
The implications (i)$\Rightarrow$(ii) and (iii)$\Rightarrow$(i) are straightforward to verify. 
In the following we prove that (ii)$\Rightarrow$(iii).

Assume (ii). For each $P \in \ph$, we then have
\begin{eqnarray}
W^* P \otimes \id_\K W &=& W^* P \otimes \id_\K W W^* P \otimes \id_\K W \nonumber \\
\Rightarrow W W^* P \otimes \id_\K W W^* &=& \left( W W^* P \otimes \id_\K W W^* \right)^2
\end{eqnarray}
implying that $W W^* P \otimes \id_\K W W^*$ is a projection. 
Since both $W W^*$ and $P \otimes \id_\K$ are projections, then by \cite[Lemma 2.2.1]{LaYl04} we have $\left[W W ^*, P \otimes \id_\K \right]=0$ for all $P \in \ph$. 
By the linearity and continuity of the mapping $B\mapsto \left[W W^*, B \otimes \id_\K \right]$ we get $\left[W W^*, B \otimes \id_\K \right]=0$ for all $B \in \lh$. 
\end{proof}

The set of all channels $\mc{C}$ is a convex set and the extremal elements of $\mc{C}$ are called \emph{extreme channels}. 
All multiplicative channels are extreme, but there exist also other extreme channels. 
For instance, fix a unit vector $\varphi\in\hi$ and define a channel by $\mc{E}^*(B)=\ip{\varphi}{B\varphi} \id_\hi$. 
This channel is clearly not multiplicative, but it is easy to verify that it is extreme.
The corresponding Schr\"{o}dinger channel $\Ec$ is known as a \emph{complete state space contraction}, since $\Ec(\rho) = P[\varphi]$ for all states $\rho \in \sh$.

%%%%%%%%%%%%%%%%%%%%%%%%%%%%%%%%%%%%
\subsection{Measurement models}\label{sec:memo}
%%%%%%%%%%%%%%%%%%%%%%%%%%%%%%%%%%%%

Every physical measurement is based on the same general concept: the observed system is brought into contact with some measuring apparatus and the value of the measured observable is read from the apparatus' pointer scale.

\begin{figure}
\includegraphics[width=0.9\textwidth]{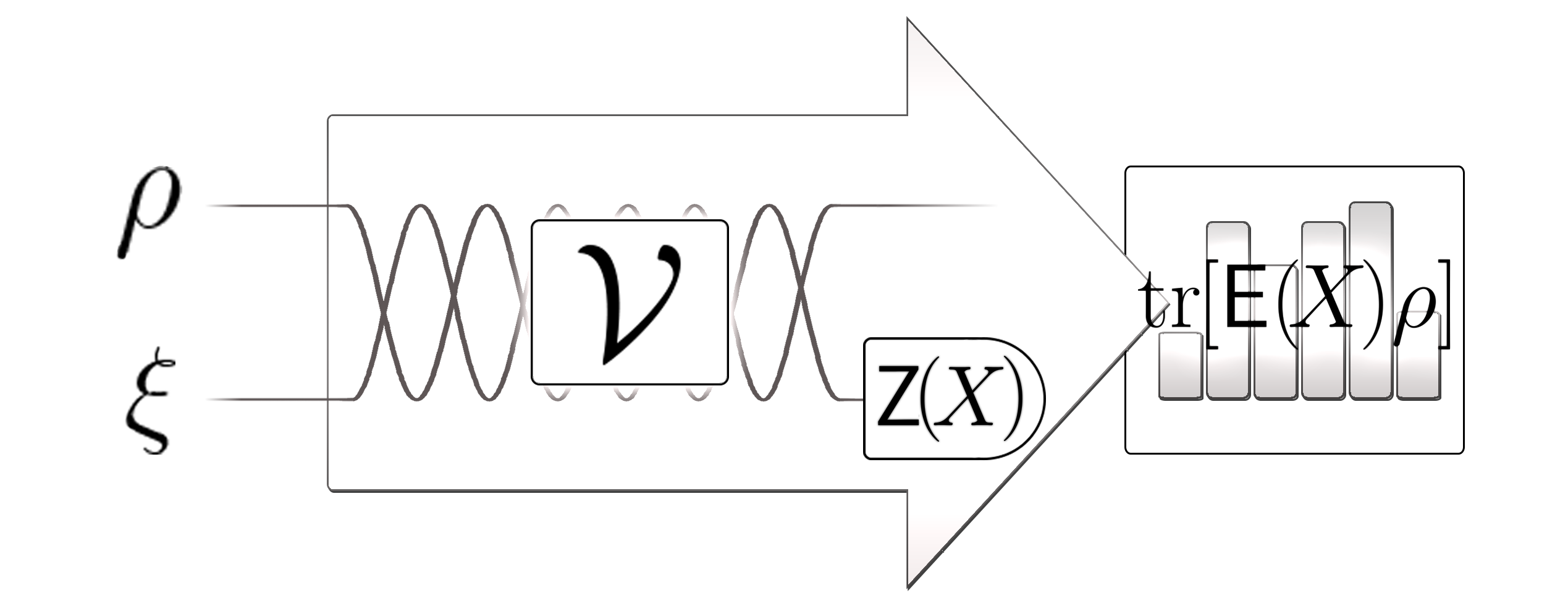} 
\caption{Measurement schematics: Initial state of the system ($\rho\in \sh$) is interacting with the initial probe state of the measurement apparatus $(\xi \in \sk)$ via measurement coupling ($\mc V: \mc T(\hi \otimes \K)\rightarrow \mc T(\hi \otimes \K)$). The measurement statistics of the pointer observable ($\Zo(X) \in \lk$) acting on the combined state ($\mc V(\rho\otimes\xi) \in \mc S (\hi \otimes \K)$) actualize the statistics of the measured observable $(\tr{\Eo(X) \, \rho})$. }\label{fig:mnt}
\end{figure}

Mathematically, a \emph{measurement model} is a 4-tuple $\langle \K, \Zo,\mc V, \xi \rangle$, where $\K$ is the Hilbert space associated to the apparatus, $\Zo: \Sigma \rightarrow \lk$ is the \emph{pointer observable}, $\mc V: \trhk \rightarrow \trhk$ is the channel describing the \emph{measurement interaction} between the system and the apparatus and $\xi \in \sk$ is the initial \emph{probe state}. 
The measurement statistics produce the measured observable $\Eo$ via the \emph{probability reproducibility condition} 
\begin{eqnarray}
\tr{\Eo(X) \, \rho} = \tr{\id_\hi \otimes \Zo(X) \, \mc V(\rho \otimes \xi) } \, ,  
\end{eqnarray}
which is required to hold for all initial system states $\rho \in \sh$ and $X\in \Sigma$; see Fig. \ref{fig:mnt}. In other words, the measured observable is
\begin{eqnarray}\label{obs:repro}
\Eo(X) = \ptr{\mc V^*\left(\id_\hi \otimes \Zo(X)\right) \, \id_\hi \otimes \xi} \, .
\end{eqnarray}

Following the terminology of \cite{OQP97}, we say that a measurement model is \emph{normal} when the pointer observable is sharp, the interaction is given by a unitary channel ($\mc V(\rho \otimes \xi) = G \rho \otimes \xi G^*$ for some unitary operator $G$ on $\hi\otimes\K$) and the initial probe state is pure ($\xi = P[\phi]$ for some unit vector $\phi \in \K$). In such a case we write the measurement model shortly as $\langle \K, \Zo, G, \phi \rangle$.
The measured observable induced by a normal measurement model $\langle \K, \Zo, G, \phi \rangle$ is 
\begin{equation}\label{eq:naimark-memo}
\Eo(X) = W_\phi^*\,  G^*(\id_\hi \otimes \Zo(X))G \, W_\phi \, , 
\end{equation} 
for all $X\in \Sigma$, where $W_\phi(\varphi)=\varphi\otimes \phi$ for all $\varphi\in\hi$. 
It is a fundamental result of quantum measurement theory that every observable has a normal measurement model \cite{Ozawa84}.
We say that \eqref{eq:naimark-memo} is a \emph{measurement dilation} of $\Eo$. 
It is obviously a special kind of Naimark dilation of $\Eo$. Using Prop. \ref{prel:sharp} one easily confirms that the sharpness of $\Eo$ is equivalent to $\left[ G^*(\id_\hi \otimes \Zo(X))G, W_\phi W_\phi ^* \right]=0$ for all $X\in \Sigma$.

Measurement processes can be used to model not only any observable but also an arbitrary quantum channel. 
Namely, a measurement model $\langle \K, \Zo, \Vc, \xi \rangle$ induces a channel 
\begin{equation}\label{chan:repro-s}
\Ec(\rho) = \ptr{\Vc(\rho \otimes \xi)} \, .
\end{equation} 
Note that the pointer observable plays no role in \eqref{chan:repro-s}.
The corresponding Heisenberg channel is given by
\begin{eqnarray}\label{chan:repro}
\Ec^*(B) = \ptr{\Vc^*(B\otimes \id_\K) \, \id_\hi \otimes \xi} \, .
\end{eqnarray}
As in the case of observables, every channel has a normal measurement model $\langle \K, \Zo, G, \phi \rangle$, and in that case \eqref{chan:repro} takes the form
\begin{equation}\label{eq:stinespring-2}
\Ec^*(B) = W_\phi^*\,  G^*(B \otimes \id_\K) G \,W_\phi \, , \quad B\in\lh \, . 
\end{equation} 
This is obviously a special kind of Stinespring dilation of $\Ec^*$ and by Prop. \ref{prel:unitary}, if $\mc E^*$ is a unitary channel, then $\left[G^*(B\otimes \id_\K)G, W_\phi W_\phi^* \right]=0$ for all $B \in \lh$.

We end this subsection with a simple observation related to the measurement models of extreme  observables and channels.
Suppose that an observable $\Eo$ has a measurement model $\langle \K, \Zo, \Vc, \xi \rangle$ where the probe state $\xi$ is mixed. 
We write the probe state $\xi$ as a convex decomposition $\xi = \sum_{i} \lambda_i P[\phi_i]$, and then 
\begin{eqnarray}
\Eo & = &  \ptr{\mc V^*\left(\id_\hi \otimes \Zo(X)\right) \, \id_\hi \otimes \xi} \\
&=& \sum_{i} \lambda_i  \, \ptr{\Vc^*(\id_\hi \otimes \Zo) \, \id_\hi \otimes P[\phi_i]} \nonumber \\
&=&  \sum_{i} \lambda_i \, \Eo_i \, , 
\end{eqnarray} 
where $\Eo_i$ are some observables. 
If $\Eo$ is extreme, then $\Eo_i=\Eo$ for each $i$,
and hence $\Eo$ has a measurement model $\langle \K, \Zo, \Vc, \phi_i \rangle$, where the probe state is pure.
An analogous argument is valid for the extreme channels.
We summarize this observation in the following proposition, earlier noted in \cite{DaPe05}.

\begin{proposition}\label{prop:ext}
Let $\mc D$ be an extreme observable/channel having a measurement model $\langle \K, \Zo, \mc V, \xi \rangle$. 
Then $\mc D$ has a measurement model $\langle \K, \Zo, \mc V, \xi' \rangle$, where  $\xi'$ is a pure state. 
\end{proposition}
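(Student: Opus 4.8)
The plan is to exploit the fact that the induced observable (resp.\ channel) depends affinely on the probe state, and then to use extremality of $\mc D$ to collapse a convex decomposition. First I would take any decomposition of the mixed probe state $\xi$ into pure states, say $\xi = \sum_i \lambda_i P[\phi_i]$ with $\lambda_i > 0$, $\sum_i \lambda_i = 1$ (e.g.\ a spectral decomposition); such a decomposition exists for every state.

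Next I would substitute this into the reproducibility formula \eqref{obs:repro} (resp.\ into \eqref{chan:repro} for a channel) and use linearity of $\Vc^*$ and of the partial trace to obtain $\mc D = \sum_i \lambda_i \mc D_i$, where $\mc D_i$ is the observable (resp.\ channel) induced by the measurement model $\langle \K, \Zo, \Vc, P[\phi_i] \rangle$. A routine verification — positivity, $\sigma$-additivity and normalisation in the observable case, and complete positivity together with unitality of $\mc D_i^*$ in the channel case — shows that each $\mc D_i$ indeed belongs to $\mc O$ (resp.\ to $\mc C$), so the identity $\mc D = \sum_i \lambda_i \mc D_i$ is a genuine convex decomposition inside the relevant convex set.

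Then I would invoke extremality of $\mc D$. For a finite decomposition this forces $\mc D_i = \mc D$ immediately; for a countably infinite one, I would peel off one term at a time, writing $\mc D = \lambda_1 \mc D_1 + (1-\lambda_1)\mc D'$ with $\mc D' = (1-\lambda_1)^{-1}\sum_{i \geq 2} \lambda_i \mc D_i$ again a member of the convex set, so that $\mc D_1 = \mc D$, and iterating. Hence $\mc D_i = \mc D$ for every $i$. Choosing any single index $i$ and setting $\xi' = P[\phi_i]$ then exhibits $\langle \K, \Zo, \Vc, \xi' \rangle$ as a measurement model of $\mc D$ with a pure probe state, which is the claim.

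I do not expect a substantive obstacle: the argument is essentially the affine-collapse principle for extreme points. The only step needing a little care is the passage from a possibly infinite convex decomposition to the conclusion, i.e.\ checking that $\mc O$ and $\mc C$ are closed under the countable convex combinations appearing above — this follows from $\sigma$-additivity of observables in the first case and from a standard norm-convergence argument for channels in the second.
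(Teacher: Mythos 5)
Your proposal is correct and follows essentially the same route as the paper: decompose the mixed probe state into pure states, use affinity of the induced observable/channel in the probe state to get a convex decomposition $\mc D = \sum_i \lambda_i \mc D_i$, and invoke extremality to conclude $\mc D_i = \mc D$. The only difference is that you spell out the peeling-off argument for countably infinite decompositions, which the paper leaves implicit.
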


%%%%%%%%%%%%%%%%%%%%%%%%%%%%%%%%%%%%
\section{Programmable quantum multimeters}\label{sec:prog}
%%%%%%%%%%%%%%%%%%%%%%%%%%%%%%%%%%%%

%%%%%%%%%%%%%%%%%%%%%%%%%%%%%%%%%%%%
\subsection{Multimeters}
%%%%%%%%%%%%%%%%%%%%%%%%%%%%%%%%%%%%

Generally speaking, \emph{multimeters} are measurement settings that can be programmed to implement any observable/channel from a specified set. 
The programming is done by changing some part of the measurement model: the probe state, the pointer observable, or the interaction.
Usually the easiest to physically realize, and thus the most interesting set of programmable multimeters, are those in which the initial probe state is changed.
In the rest of our investigation we reserve the term 'programming' for this state-programming scenario only.
Therefore, a multimeter is a measurement model in which we do not specify the pointer state, \emph{i.e.} a 3-tuple $\langle \K, \Zo, \Vc \rangle$.
This is a  \emph{normal multimeter} if $\Zo$ is sharp and $\Vc$ is unitary.

From an abstract point of view a quantum multimeter is a specific kind of function that maps quantum states into observables or channels.
This function must be physically realizable, meaning that it should be induced from a measurement process. 
The essential fact is that all functions from states to either channels or observables are not physically realizable in this way. 
For instance, it is easy to construct a bijective function between pure qubit states and qubit unitary channels. 
However, there is no programmable multimeter capable of implementing all qubit unitary channels \cite{NiCh97} nor all sharp qubit observables \cite{DaPe05}.

%%%%%%%%%%%%%%%%%%%%%%%%%%%%%%%%%%%%
\subsection{Programming quantum observables}\label{sec:probs}
%%%%%%%%%%%%%%%%%%%%%%%%%%%%%%%%%%%%

In this subsection we study the programmability of quantum observables. 
We denote the set of programmable observables with a given multimeter $\mc{M}$ by $\mc{O}_\mc{M}$. 
The analysis done in articles \cite{DuBu02,FiDuFi02,DaPe05} indicates that $\mc{O}_\mc{M}$ is always a proper subset of $\mc{O}$, meaning that a fixed multimeter cannot be used to program all observables.
Here we elaborate this result by showing that different sharp observables demand orthogonal programming vectors, proving that $\mc{O}_\mc{M}$ can contain at most $\dim\K$ sharp observables.
Note that due to Proposition \ref{prop:ext} the programming protocols of all extreme observables can be restricted to vector states. 

\begin{proposition}\label{obs:programming} 
Let $\mc{M}=\langle \hik, \Zo, \Vc \rangle$ be a multimeter and suppose that two different sharp observables $\Ao_1$ and $\Ao_2$ can be programmed with vector states $\phi_1$ and $\phi_2$, respectively. 
Then $\phi_1$ and $\phi_2$ are orthogonal.
\end{proposition}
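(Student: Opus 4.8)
The plan is to pass to a Naimark dilation that is \emph{common} to the two programmed observables, with all dependence on the probe vectors confined to a pair of isometries, and then let sharpness enter through the commutation condition of Proposition~\ref{prel:sharp}(iii). First I would rewrite the programmed observables in dilated form: by \eqref{obs:repro} we have $\Ao_i(X) = \ptr{\Vc^*(\id_\hi\otimes\Zo(X))\,\id_\hi\otimes P[\phi_i]}$, and expanding the partial trace in an orthonormal basis of $\hik$ gives the elementary identity $\ptr{A\,(\id_\hi\otimes P[\phi])} = W_\phi^* A W_\phi$ for every $A\in\lhk$ and unit vector $\phi\in\hik$, where $W_\phi:\hi\to\hi\otimes\hik$, $W_\phi\varphi=\varphi\otimes\phi$. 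Setting $\Go(X):=\Vc^*(\id_\hi\otimes\Zo(X))$ — an observable on $\hi\otimes\hik$, since $\Vc^*$ is positive, normal and unital — we obtain $\Ao_i(X)=W_{\phi_i}^*\,\Go(X)\,W_{\phi_i}$.

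Next I would build a common dilation. Let $(\mc N,\Co,J)$ be a Naimark dilation of $\Go$, so $\Co$ is sharp, $J:\hi\otimes\hik\to\mc N$ is an isometry and $\Go(X)=J^*\Co(X)J$. Putting $W_i:=JW_{\phi_i}:\hi\to\mc N$, these are isometries and $(\mc N,\Co,W_i)$ is a Naimark dilation of $\Ao_i$, i.e. $\Ao_i(X)=W_i^*\Co(X)W_i$. Since $J^*J=\id$,
\[
W_1^*W_2 = W_{\phi_1}^*W_{\phi_2} = \ip{\phi_1}{\phi_2}\,\id_\hi =: c\,\id_\hi ,
\]
so the task reduces to showing $c=0$. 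As each $\Ao_i$ is sharp, Proposition~\ref{prel:sharp}(iii) gives $[\Co(X),W_iW_i^*]=0$ for all $X\in\Sigma$ and $i=1,2$.

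Finally I would evaluate the cross term $W_1^*\Co(X)W_2$ in two ways. Using $[\Co(X),W_1W_1^*]=0$ and $W_1^*W_1=\id_\hi$ gives $\Ao_1(X)W_1^*=W_1^*\Co(X)W_1W_1^*=W_1^*\Co(X)$, and symmetrically $\Co(X)W_2=W_2\Ao_2(X)$. Hence, for every $X\in\Sigma$,
\[
c\,\Ao_1(X)=\Ao_1(X)W_1^*W_2=W_1^*\Co(X)W_2=W_1^*W_2\,\Ao_2(X)=c\,\Ao_2(X) .
\]
Since $\Ao_1\neq\Ao_2$, there is some $X$ with $\Ao_1(X)\neq\Ao_2(X)$, forcing $c=\ip{\phi_1}{\phi_2}=0$.

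The genuinely delicate point is the second step: one must dilate the \emph{global} observable $\Go$ attached to the multimeter rather than $\Ao_1$ and $\Ao_2$ separately, so that the sharp pointer $\Co$ and the space $\mc N$ are shared while all $\phi_i$-dependence is packaged into $W_1,W_2$ with the relation $W_1^*W_2=\ip{\phi_1}{\phi_2}\id_\hi$ surviving intact; the remaining manipulations are bookkeeping. One should also note that the implication ``sharp $\Rightarrow$ commutation'' in Proposition~\ref{prel:sharp}(iii) is valid for any, not necessarily minimal, Naimark dilation — the argument being exactly the one used for channels in the proof of Proposition~\ref{prel:unitary}.
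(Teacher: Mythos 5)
Your proof is correct, but it takes a genuinely different route from the paper's. The paper argues entirely at the level of effects: it first produces disjoint sets $X,Y$ with $\Ao_1(X)\Ao_2(Y)\neq 0$, picks unit vectors $\varphi_1,\varphi_2$ with $\Ao_1(X)\varphi_1=\varphi_1$, $\Ao_2(Y)\varphi_2=\varphi_2$ and $\ip{\varphi_1}{\varphi_2}\neq 0$, and then uses only the operator inequalities $0\leq \Fo(Y)\leq \id$ for $\Fo(Y)=\Vc^*(\id_\hi\otimes\Zo(Y))$ to conclude $\Fo(Y)\,\varphi_1\otimes\phi_1=0$ and $\Fo(Y)\,\varphi_2\otimes\phi_2=\varphi_2\otimes\phi_2$, whence $\ip{\varphi_1}{\varphi_2}\ip{\phi_1}{\phi_2}=0$. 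That argument is elementary and self-contained: it never invokes the Naimark dilation theorem, nor the validity of Proposition~\ref{prel:sharp}(iii) for non-minimal dilations. Your argument instead dilates the global observable $\Go=\Vc^*(\id_\hi\otimes\Zo(\cdot))$ once, packages the probe dependence into the isometries $W_i=JW_{\phi_i}$ with $W_1^*W_2=\ip{\phi_1}{\phi_2}\id_\hi$, and lets sharpness act through the commutation relations $[\Co(X),W_iW_i^*]=0$; the payoff is the clean operator identity $\ip{\phi_1}{\phi_2}\,\Ao_1(X)=W_1^*\Co(X)W_2=\ip{\phi_1}{\phi_2}\,\Ao_2(X)$, which dispenses with the disjoint-set and eigenvector bookkeeping and structurally parallels the commutator-based proofs of Propositions~\ref{obs:ex} and~\ref{chan:ex}. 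The one delicate point — that the implication ``sharp $\Rightarrow$ commutation'' in Proposition~\ref{prel:sharp}(iii) holds for the non-minimal dilation $(\mathcal{N},\Co,W_i)$ you construct — you correctly identify and justify via the same projection lemma used in the proof of Proposition~\ref{prel:unitary}, so there is no gap.
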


\begin{proof}
We fix disjoint sets $X,Y\in\Sigma$ such that $\Ao_1(X) \Ao_2(Y) \neq 0$.
These kind of sets exist, for if $\Ao_1(X) \Ao_2(Y)=0$ holds for all disjoint sets $X,Y\in\Sigma$, then
\begin{eqnarray}
\Ao_1(X) = \Ao_1(X) \Ao_2(\Omega) = \Ao_1(X) \Ao_2(X) = \Ao_1(\Omega) \Ao_2(X) = \Ao_2(X) \, , 
\end{eqnarray}
contradicting $\Ao_1\neq\Ao_2$.

The condition $\Ao_1(X) \Ao_2(Y) \neq  0$ means that the projections $\Ao_1(X)$ and $\Ao_2(Y)$ are non-orthogonal. 
Hence, there exist unit vectors $\varphi_1$ and $\varphi_2$ in $\hi$ such that $\Ao_1(X)\varphi_1=\varphi_1$, $\Ao_2(Y)\varphi_2=\varphi_2$ and $\ip{\varphi_1}{\varphi_2}\neq 0$.
Since $X$ and $Y$ are disjoint, we have $\Ao_1(Y)\varphi_1= 0$ and $\Ao_2(X)\varphi_2= 0$.

Denote $\Fo = \Vc^*(\id_\hi \otimes \Zo)$. 
Then
\begin{eqnarray}
\ip{\varphi_1 \otimes \phi_1}{\Fo(Y) \varphi_1 \otimes \phi_1} = \ip{\varphi_1}{\Ao_1(Y) \varphi_1} = 0
\end{eqnarray}
and
\begin{eqnarray}
\ip{\varphi_2 \otimes \phi_2}{\Fo(Y) \varphi_2 \otimes \phi_2} = \ip{\varphi_2}{\Ao_2(Y) \varphi_2} =1,
\end{eqnarray}
from which it follows that $\Fo(Y) \varphi_1 \otimes \phi_1 = 0$ and $\Fo(Y) \varphi_2 \otimes \phi_2 =  \varphi_2 \otimes \phi_2$.
Using these two equations we obtain
\begin{eqnarray}
0 &=& \ip{\Fo(Y) \varphi_1 \otimes \phi_1}{\varphi_2 \otimes \phi_2} = \ip{\varphi_1 \otimes \phi_1}{\Fo(Y) \varphi_2 \otimes \phi_2} \nonumber \\
&=& \ip{\varphi_1 \otimes \phi_1}{\varphi_2 \otimes \phi_2} = \ip{\varphi_1}{\varphi_2}\ip{\phi_1}{\phi_2} \, , 
\end{eqnarray}
proving that $\ip{\phi_1}{\phi_2}=0$. 
\end{proof}

A stronger version of Prop. \ref{obs:programming} holds when the multimeter is assumed to be normal.

\begin{proposition}\label{obs:ex} Let $\mathcal{M}= \langle \K, \Zo, G \rangle$ be a normal multimeter and suppose two different observables, a sharp observable $\Ao$ and an extreme observable $\Eo$, can be programmed with vector states $\phi$ and $\phi'$, respectively. 
Then $\phi$ and $\phi'$ are orthogonal.
\end{proposition}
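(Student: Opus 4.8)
The plan is to realise both programmed observables as Naimark dilations of one and the same sharp observable on $\hi\otimes\K$, and then to play the sharpness of $\Ao$ off against the extremality of $\Eo$. First I would set $\Fo(X)=G^*(\id_\hi\otimes\Zo(X))G$ for $X\in\Sigma$. Since the multimeter is normal, $\Zo$ is sharp and $G$ is unitary, so $\Fo$ is again a sharp observable, now on $\hi\otimes\K$; and by the measurement dilation formula \eqref{eq:naimark-memo} we have $\Ao(X)=W_\phi^*\Fo(X)W_\phi$ and $\Eo(X)=W_{\phi'}^*\Fo(X)W_{\phi'}$. I would also record that $W_\phi W_\phi^*=\id_\hi\otimes P[\phi]$ is the projection onto $\hi\otimes\complex\phi$.

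The next step uses the sharpness of $\Ao$ only. Since $(\hi\otimes\K,\Fo,W_\phi)$ is a Naimark dilation of the sharp observable $\Ao$, Proposition \ref{prel:sharp}(iii) yields $[\Fo(X),\id_\hi\otimes P[\phi]]=0$ for all $X\in\Sigma$. Consequently each $\Fo(X)$ leaves the mutually orthogonal subspaces $\hi\otimes\complex\phi$ and $\hi\otimes\phi^\perp$ invariant, and under the identification $W_\phi:\hi\to\hi\otimes\complex\phi$ it restricts on the first of these to $\Ao(X)$.

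Now I would argue by contradiction, assuming $\ip{\phi}{\phi'}\neq 0$. As $\Ao\neq\Eo$, the vector $\phi'$ is not a scalar multiple of $\phi$ (otherwise $P[\phi']=P[\phi]$, forcing $\Eo=\Ao$), so write $\phi'=\ip{\phi}{\phi'}\,\phi+\mu\,\psi$ with a unit vector $\psi\perp\phi$, $\mu\neq 0$ and $|\ip{\phi}{\phi'}|^2+|\mu|^2=1$. Expanding $\Eo(X)=W_{\phi'}^*\Fo(X)W_{\phi'}$ along this decomposition of $\phi'$ and using the invariance from the previous step to discard the off-diagonal terms, one obtains
\[
\Eo=|\ip{\phi}{\phi'}|^2\,\Ao+|\mu|^2\,\Go,\qquad\text{where }\Go(X):=W_{\psi}^*\Fo(X)W_{\psi}
\]
is again an observable on $\hi$. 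Since both coefficients lie strictly between $0$ and $1$, this is a nontrivial convex decomposition of $\Eo$, so extremality of $\Eo$ forces $\Eo=\Ao$, contradicting $\Ao\neq\Eo$. Hence $\ip{\phi}{\phi'}=0$.

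The one place that needs genuine care is the expansion of $\Eo(X)$: one must verify that the commutation relation of Proposition \ref{prel:sharp}(iii) really does annihilate every cross term $\ipp{\varphi\otimes\phi}{\Fo(X)}{\varphi'\otimes\psi}$, and that the two diagonal blocks $W_\phi^*\Fo(X)W_\phi$ and $W_\psi^*\Fo(X)W_\psi$ are bona fide $\sigma$-additive, positive, normalised maps — i.e.\ honest elements of $\mc{O}$ — so that the displayed identity is a legitimate convex decomposition; the rest is bookkeeping. It is worth noting that normality of $\mc{M}$ is used exactly once, to make $\Fo$ sharp, which is what grants access to Proposition \ref{prel:sharp}(iii); this is the structural difference from the weaker Proposition \ref{obs:programming}, which dispenses with normality but then requires both observables to be sharp.
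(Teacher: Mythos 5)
Your proposal is correct and follows essentially the same route as the paper's own proof: decompose $\phi'$ into its components along $\phi$ and along $\phi^\perp$, use Proposition \ref{prel:sharp}(iii) (sharpness of $\Ao$ gives $[\Fo(X),W_\phi W_\phi^*]=0$) to kill the cross terms $W_\phi^*\Fo(X)W_\psi$, and read off a nontrivial convex decomposition of $\Eo$ contradicting its extremality. The only (welcome) addition is your explicit handling of the degenerate case $\phi'\in\complex\phi$, which the paper passes over by assuming $\beta\neq 0$.
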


\begin{proof}
Assume $\ip{\phi}{\phi'}\neq 0$. 
Then $\phi'$ can be written as $\phi'=\alpha \phi + \beta \eta$, where $\ip{\phi}{\eta}=0$ and $\alpha,\beta \in \C\setminus\{0\}$ satisfy $|\alpha|^2 + |\beta|^2 = 1$. 
We have
\begin{equation}
W_{\phi'}= \alpha W_{\phi} + \beta W_{\eta} \, , \quad W^*_{\phi'}= \bar{\alpha} W^*_{\phi} + \bar{\beta} W^*_{\eta} \, , 
\end{equation}
and
\begin{equation}
W_{\phi}^* W_{\eta}= \ip{\phi}{\eta} \id =0 = W_{\eta}^*  W_{\phi} \, .
\end{equation}

Denote $\Zo'(X)=G^*\left(\id_\hi\otimes\Zo(X)\right) G$ for all $X\in\Sigma$. 
We recall from Prop. \ref{prel:sharp} that the sharpness of $\Ao$ is equivalent to $[\Zo'(X), W_{\phi} W_{\phi}^*]=0$ for all $X\in\Sigma$.
Therefore, 
\begin{equation}
W_{\phi}^* \Zo'(X) W_{\eta} = W^*_{\phi}W_{\phi} \left( W_{\phi}^* \Zo'(X) W_{\eta} \right) = W^*_{\phi} \Zo'(X) W_{\phi}  W_{\phi}^*W_{\eta} = 0 \, .
\end{equation}
It follows that
\begin{eqnarray} 
\Eo(X) &=& W_{\phi'}^* \Zo'(X) W_{\phi'} \nonumber \\
&=& |\alpha|^2 W_{\phi}^* \Zo'(X) W_{\phi} + \bar{\alpha} \beta W_{\phi}^* \Zo'(X) W_{\eta} \nonumber \\ 
& &+\alpha \bar{\beta} W_{\eta}^* \Zo'(X) W_{\phi}+ |\beta|^2 W_{\eta}^* \Zo'(X) W_{\eta} \nonumber \\
&=& |\alpha|^2 \Ao(X) + |\beta|^2 \Fo(X),
\end{eqnarray}
where $\Fo$ is an observable induced by the measurement $\langle \K, \Zo, G, \eta \rangle$. 
This contradicts the extremality of $\Eo$, hence $\ip{\phi}{\phi'} \neq 0$ must be false. \qed
\end{proof}

One may wonder if two extreme observables require orthogonal program states. 
This is not the case as the following example illustrates.

\begin{example}
Covariant phase space observables have an important role in quantum mechanics \cite{CSWTG00}. 
It is well known that every covariant phase space observable rises as an operator density defined by the \emph{Weyl operators} $W(q,p)=e^{\frac{i q p}{2}} e^{-i q P} e^{i p Q}$, $q,p\in \R$ and a state $\xi \in \sk$ in the form
\begin{eqnarray} \label{density}
\Eo^\xi(Z)=\frac{1}{2\pi} \int_Z W(q,p)\, \xi \, W(q,p)^* dq \, dp,
\end{eqnarray} for all $Z \in \mc B(\R^2)$. 
It is easy to see that $\Eo^\xi$ is extremal in the set of all covariant phase space observables if and only if $\xi=P[\phi]$ for some unit vector $\phi\in \K$, and furthermore it has been shown in \cite{HePe12} that $\Eo^\xi$ is extremal in the set of all observables on $\R^2$ if and only if $\ip{\phi}{W(q,p)\phi}\neq 0$ for all $q,p\in \R$. 
It follows that, for example, every Gaussian state $\phi_{a,b}(x)=N \text{exp}(-\frac{a}{2} x^2 + b x )$, where $a>0$, $b \in \R$ and $N$ is an appropriate normalizing constant, induces some extremal covariant phase space observable via (\ref{density}).

A physically feasible normal measurement model for such observables, based on \emph{eight port homodyne detection}, is described in \cite{KiLa07}. 
In this particular model the vector state $P[\phi]$ of $\Eo^{P[\phi]}$ is regarded as the initial probe state. The above considerations show that two extremal observables do not necessarily require orthogonal vector states for programming since for example $\phi_{a,b}$ and $\phi_{c,d}$ are non-orthogonal for all $a,c>0$, $b,d\in \R$.
\end{example}

In the statement of Prop. \ref{obs:ex}, one can require that $\Eo$ is merely an extreme element of $\mc{O}_{\mc{M}}$ rather than of $\mc{O}$.
The proof is still valid without any changes. 
Therefore, Prop. \ref{obs:ex} gives some indication on what the set $\mc{O}_{\mc{M}}$ looks like. 
In the case where the maximal amount of sharp observables can be programmed, the structure of $\mc{O}_{\mc{M}}$  is particularly simple.

\begin{proposition}\label{obs:basis}
Let $\dim \K<\infty$.
Suppose one can program $n=\dim \K$ sharp observables $\Ao_1,\ldots,\Ao_n$, with a normal multimeter $\mathcal{M}$. 
Then $\mc{O}_{\mc{M}}$ is the convex hull of the set $\{\Ao_1,\ldots,\Ao_n \}$. 
\end{proposition}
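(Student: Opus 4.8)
The plan is to prove the two set inclusions $\mathrm{conv}\{\Ao_1,\ldots,\Ao_n\}\subseteq\mc{O}_{\mc{M}}$ and $\mc{O}_{\mc{M}}\subseteq\mathrm{conv}\{\Ao_1,\ldots,\Ao_n\}$; the first is immediate from a convexity observation, and the second carries the weight of the argument. To set things up, I would first observe that, since sharp observables are extreme, Proposition~\ref{prop:ext} allows the programming of each $\Ao_i$ to be taken by a vector state $\phi_i$, and then Proposition~\ref{obs:programming}, applied to each pair $(\Ao_i,\Ao_j)$, shows that $\phi_1,\ldots,\phi_n$ are mutually orthogonal. Because $n=\dim\K$, the family $\{\phi_1,\ldots,\phi_n\}$ is therefore an orthonormal basis of $\K$; this is the one place the hypothesis $n=\dim\K$ enters.

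For the first inclusion, for $\xi\in\sk$ write $\Eo_\xi\in\mc{O}$ for the observable programmed by $\xi$; with $\Zo'(X):=G^*(\id_\hi\otimes\Zo(X))G$ this is $\Eo_\xi(X)=\ptr{\Zo'(X)\,\id_\hi\otimes\xi}$. The assignment $\xi\mapsto\Eo_\xi$ is linear, hence affine, so $\mc{O}_{\mc{M}}=\{\Eo_\xi:\xi\in\sk\}$ is convex; since it contains each $\Ao_i=\Eo_{P[\phi_i]}$ it contains their convex hull. (Equivalently, a mixed probe $\xi=\sum_i\lambda_i P[\phi_i]$ programs $\sum_i\lambda_i\Ao_i$.)

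For the reverse inclusion I would take an arbitrary $\xi\in\sk$ and expand it in the basis $\{\phi_j\}$ as $\xi=\sum_{j,k}c_{jk}\kb{\phi_j}{\phi_k}$, where $(c_{jk})$ is positive semidefinite and $\sum_j c_{jj}=\tr{\xi}=1$. Writing $W_\psi\varphi=\varphi\otimes\psi$ (so that $W_{\phi_i}^*\Zo'(X)W_{\phi_i}=\Ao_i(X)$ by \eqref{eq:naimark-memo}), and computing the partial trace over the basis $\{\phi_j\}$, one obtains the bilinear formula
\begin{equation}
\Eo_\xi(X)=\sum_{j,k}c_{jk}\,W_{\phi_k}^*\,\Zo'(X)\,W_{\phi_j}\,.
\end{equation}
The key step is that the off-diagonal ($j\neq k$) terms vanish. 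Indeed, sharpness of $\Ao_j$ is, by Proposition~\ref{prel:sharp} applied to its measurement dilation, equivalent to $[\Zo'(X),W_{\phi_j}W_{\phi_j}^*]=0$ for all $X$, while orthonormality gives $W_{\phi_k}^*W_{\phi_j}=\ip{\phi_k}{\phi_j}\id=0$ whenever $j\neq k$; hence, exactly as in the proof of Proposition~\ref{obs:ex},
\begin{equation}
W_{\phi_k}^*\Zo'(X)W_{\phi_j}=W_{\phi_k}^*\Zo'(X)W_{\phi_j}W_{\phi_j}^*W_{\phi_j}=W_{\phi_k}^*W_{\phi_j}W_{\phi_j}^*\Zo'(X)W_{\phi_j}=0\,.
\end{equation}
Therefore $\Eo_\xi(X)=\sum_j c_{jj}\,\Ao_j(X)$ for all $X$, and since $c_{jj}\geq0$ with $\sum_j c_{jj}=1$ this exhibits $\Eo_\xi$ as a convex combination of $\Ao_1,\ldots,\Ao_n$. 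Combining the two inclusions gives $\mc{O}_{\mc{M}}=\mathrm{conv}\{\Ao_1,\ldots,\Ao_n\}$.

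I do not expect a serious obstacle here. The two points requiring care are (a) establishing, via $n=\dim\K$ together with Proposition~\ref{obs:programming}, that the programming vectors form an orthonormal basis of $\K$, and (b) running the argument for a general, possibly mixed, probe $\xi$, which is precisely what produces the off-diagonal terms in the bilinear formula. Dealing with those cross terms is the heart of the proof, but it is a verbatim reuse of the commutation trick already employed for Proposition~\ref{obs:ex}, so it is routine.
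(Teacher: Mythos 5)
Your proof is correct and follows essentially the same route as the paper's: orthogonality of the programming vectors via Proposition~\ref{obs:programming} makes $\{\phi_1,\ldots,\phi_n\}$ an orthonormal basis of $\K$, and the cross terms $W_{\phi_k}^*\Zo'(X)W_{\phi_j}$ ($j\neq k$) are killed by the same sharpness--commutation trick used in Proposition~\ref{obs:ex}. The only (harmless) difference is that you treat an arbitrary mixed probe in one bilinear computation, whereas the paper first handles pure probe vectors via their basis expansion and then reduces mixed probes to pure ones by convex decomposition.
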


\begin{proof}
The programming vectors $\phi_1,\ldots,\phi_n$ form an orthonormal basis of $\hik$. 
Every unit vector $\psi\in\hik$ has a basis expansion $\psi = \sum_{i=1}^n \ip{\phi_i}{\psi} \phi_i$.
A similar calculation as in the proof of Prop. \ref{obs:ex} shows that the observable $\Eo$ obtained by using $\psi$ as a program state is
\begin{equation}
\Eo = \sum_{i=1}^n \mo{\ip{\phi_i}{\psi}}^2 \Ao_i \, .
\end{equation}
hence a mixture of the observables $\Ao_1,\ldots,\Ao_n$.

Every mixed program state $\xi$ has a convex decomposition into pure states, $\xi = \sum_{i} \lambda_i P[\psi_i]$.
The resulting observable $\Eo$ is a mixture of the observables corresponding to the pure states $P[\psi_i]$.
Thus, every observable in $\mc{O}_{\mc{M}}$ is a mixture of the observables $\Ao_1,\ldots,\Ao_n$.
\end{proof}

%\begin{example}
%\red{
%Let $\langle \K, \Zo, V, \cdot \rangle$ be a multimeter such that $\dim(\K)=n<\infty$. If $\ptr{V^*(\id_\hi \otimes \Zo) \, \id_\hi \otimes \nicefrac{1}{n} \id_\hik}$ is extreme, then at most one sharp observable can be programmed with this multimeter. 
%Trivially this is the situation for instance with multimeter $\langle \hi, \Ao, \sigma_{U_{\text{SWAP}}}, \cdot \rangle$ of a sharp observable $\Ao: \Sigma \rightarrow \lh$.
%NEEDS MORE EXPLANATION AND MOTIVATION
%}
%\end{example}

%%%%%%%%%%%%%%%%%%%%%%%%%%%%%%%%%%%%
\subsection{Programming quantum channels}
%%%%%%%%%%%%%%%%%%%%%%%%%%%%%%%%%%%%

As in the case of programming observables there is no multimeter which could be deterministically programmed to realize all channels, meaning that the set of programmable channels $\mc{C}_{\mc{M}}$ with a given multimeter $\mc{M}$ is always a proper subset of $\mc{C}$. 
A special case in which the multimeter is assumed to be normal was first analysed by Nielsen and Chuang who showed that, when such a multimeter is programmed to realize two different unitary channels, the programming vectors must be orthogonal \cite{NiCh97}.

In this subsection we generalize this result for general (non-normal) multimeters capable of realizing two different unitary channels.
Again Proposition \ref{prop:ext}, together with the extremality of unitary channels, shows that one only needs to consider pure programming states. 
We remind that, since the pointer observable plays no role in the channel programming scenarios, the corresponding entry in the measurement model is simply omitted; the programmable multimeter is given by a pair $\langle \K, \mc V \rangle$. 
Often in the literature this also known as programmable \emph{gate array} or  \emph{processor}.

\begin{proposition}\label{chan:programming} 
Let $\mathcal{M}=\langle \hik, \Vc \rangle$ be a multimeter and suppose that two different unitary channels $\Uc_1$ and $\Uc_2$ can be programmed with vector states $\phi_1$ and $\phi_2$, respectively. 
Then $\phi_1$ and $\phi_2$ are orthogonal.
\end{proposition}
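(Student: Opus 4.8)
The plan is to reduce the channel statement to the already-proven observable statement, Proposition \ref{obs:programming}, by building sharp observables out of the unitary channels. The key idea: a unitary channel is multiplicative (by \eqref{eq:unitary}), so if we feed it a sharp observable on the output side we get a sharp observable on the input side. Concretely, let $\Do$ be any sharp observable on $\hi$ with at least two distinct nontrivial effects. For $j=1,2$, define $\Eo_j(X) = \Uc_j^*(\Do(X))$. Since $\Uc_j^*$ is multiplicative and unital and $\Do$ is sharp, each $\Eo_j$ is again a sharp observable: $\Eo_j(X)^2 = \Uc_j^*(\Do(X))\Uc_j^*(\Do(X)) = \Uc_j^*(\Do(X)^2) = \Uc_j^*(\Do(X)) = \Eo_j(X)$, and $\Eo_j(\Omega) = \Uc_j^*(\id_\hi) = \id_\hi$, with $\sigma$-additivity following from normality/continuity of $\Uc_j^*$.

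Next I would exhibit a multimeter that programs $\Eo_1$ and $\Eo_2$ with the vectors $\phi_1$ and $\phi_2$. Given that $\mathcal{M}=\langle\hik,\Vc\rangle$ programs $\Uc_j$ with $\phi_j$, i.e. $\Uc_j^*(B) = \ptr{\Vc^*(B\otimes\id_\K)\,\id_\hi\otimes P[\phi_j]}$, we simply augment $\mathcal{M}$ with a fixed pointer observable. Take $\Zo = \Do$ acting on... — here one has to be a little careful, since in the observable setup the pointer acts on $\K$ while $\Do$ acts on $\hi$. The clean way is instead to keep $\Vc$ and let the pointer observable be trivial and re-route: define the new measurement interaction to be $\Vc$ followed by nothing, and use the identity \eqref{obs:repro}, $\Eo_j(X) = \ptr{\Vc^*(\id_\hi\otimes\Zo(X))\,\id_\hi\otimes P[\phi_j]}$. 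We want this to equal $\Uc_j^*(\Do(X))$. Comparing with \eqref{chan:repro}, $\Uc_j^*(\Do(X)) = \ptr{\Vc^*(\Do(X)\otimes\id_\K)\,\id_\hi\otimes P[\phi_j]}$. So the two expressions agree provided we can realize $\Do(X)\otimes\id_\K$ in the role of $\id_\hi\otimes\Zo(X)$ — which we cannot do directly. The honest fix: enlarge the probe space to $\hik' = \hi\otimes\hik$, let the new interaction be a swap-conjugated version of $\Vc$ that moves the original system into the probe register, and put the pointer observable $\id\otimes\id\otimes(\text{suitable copy of }\Do)$ — i.e. mimic the channel output with a fresh ancilla. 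Once the bookkeeping is set up so that $\mathcal{M}'=\langle\hik',\Zo',\Vc'\rangle$ programs the sharp observables $\Eo_1,\Eo_2$ with $\phi_1,\phi_2$, Proposition \ref{obs:programming} immediately gives $\ip{\phi_1}{\phi_2}=0$, \emph{provided} $\Eo_1\neq\Eo_2$.

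So the last thing to check is $\Eo_1\neq\Eo_2$, equivalently that $\Uc_1^*\circ\Do \neq \Uc_2^*\circ\Do$ for a suitable choice of $\Do$. Since $\Uc_1\neq\Uc_2$, there is some $B\in\lh$ with $\Uc_1^*(B)\neq\Uc_2^*(B)$; by linearity we may take $B$ to be a projection $P$, and then we take $\Do$ to be the two-valued sharp observable $\{P,\id-P\}$, so that $\Eo_1(\{1\}) = \Uc_1^*(P) \neq \Uc_2^*(P) = \Eo_2(\{1\})$. (One should note $P\neq 0,\id$, else $\Uc_j^*(P)$ would be forced equal; this is fine because $\Uc_1^*,\Uc_2^*$ agree on $0$ and $\id$.)

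\textbf{Main obstacle.} The conceptual content — multiplicativity turns output-side sharp observables into input-side sharp observables, and the distinctness of the unitaries survives — is short. The real work is the measurement-model bookkeeping in the middle paragraph: exhibiting an honest multimeter $\langle\hik',\Zo',\Vc'\rangle$ (with the pointer living on the probe space, as the formalism demands) whose induced observables are exactly $\Uc_j^*\circ\Do$ while still using $\phi_j$ as the program vector. This amounts to appending a "read-out" ancilla and a controlled copy, and it is routine but must be written carefully so that \eqref{obs:repro} genuinely yields $\Uc_j^*(\Do(X))$; I expect that to be the only place where care is needed.
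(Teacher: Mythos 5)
Your proposal is essentially the paper's proof: choose a projection $P$ with $\Uc_1^*(P)\neq\Uc_2^*(P)$, pull back the two-valued sharp observable $\{P,\id_\hi-P\}$ through the channels to get two different sharp observables $\Uc_j^*(\Ao)$, and invoke Proposition \ref{obs:programming}. The bookkeeping you flag as the only delicate point is handled in the paper exactly along the lines you sketch — by concatenating $\mathcal{M}$ with a measurement model $\langle \K_0,\Zo,\Vc_0,\eta\rangle$ of $\Ao$ with pure probe state $\eta$ (such a model always exists), so that the enlarged multimeter programs $\Uc_1^*(\Ao)$ and $\Uc_2^*(\Ao)$ with the non-orthogonal vectors $\phi_1\otimes\eta$ and $\phi_2\otimes\eta$; your swap-plus-read-out-ancilla construction is one concrete instance of this concatenation.
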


\begin{proof}
Suppose $\ip{\phi_1}{\phi_2}\neq 0$. 
Since $\Uc_1 \neq \Uc_2$, there exists a projection $P\in\ph$ such that $\Uc_1(P) \neq \Uc_2(P)$. 
We fix a sharp observable $\Ao$ such that $\Ao(X)=P$ for some $X\in\Sigma$.
For instance, $\Ao$ can be a two-valued observable consisting of projections $P$ and $\id_\hi-P$.
Further, we fix a measurement model $\langle \K_0, \Zo, \Vc_0, \eta \rangle$ for $\Ao$, where the program state $\eta$ can be chosen pure. 
Then the concatenated multimeter in Fig. \ref{fig:concatenation} implements different sharp observables $\Ao_1=\Uc_1^*(\Ao)$ and $\Ao_2=\Uc_2^*(\Ao)$ with non-orthogonal programming vectors $\phi_1\otimes \eta$ and $\phi_2\otimes \eta$, respectively.
This contradicts Prop. \ref{obs:programming}, and hence $\ip{\phi_1}{\phi_2}\neq 0$ must be false.
\end{proof}

\begin{figure}
\includegraphics[width=0.95\textwidth]{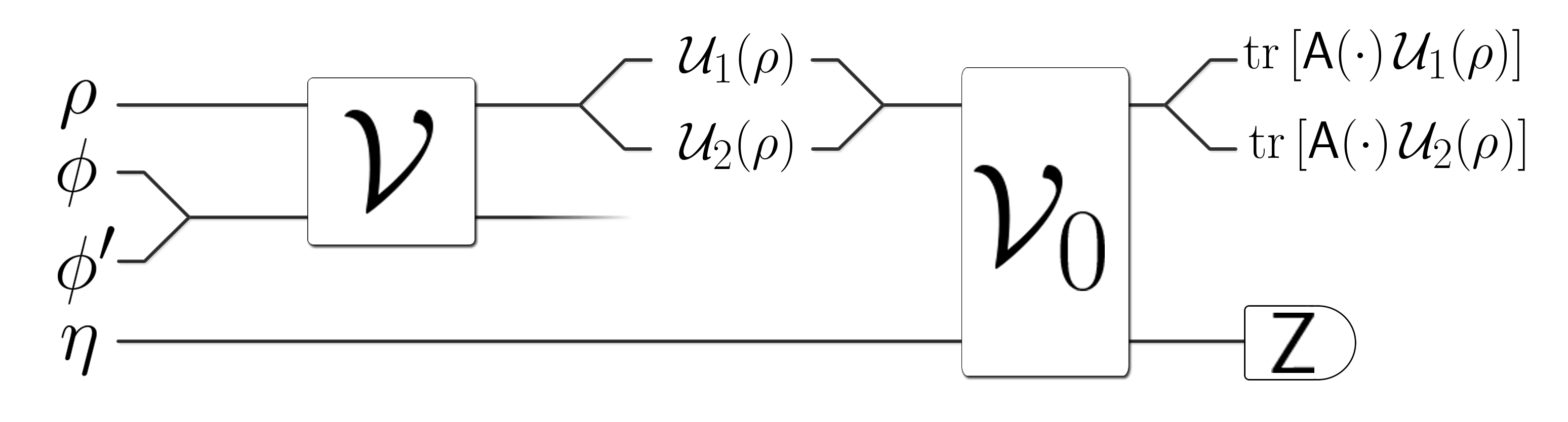} 
\caption{An illustration of a concatenation of a multimeter $\langle \K, \mc V \rangle$ and an $\Ao$-measurement $\langle \K_0, \Zo, \mc V_0, \eta \rangle$ used in proposition  \ref{chan:programming}.}
\label{fig:concatenation}
\end{figure}

As in the case of observables, a stronger version of Prop. \ref{chan:programming} is obtained when the multimeter is assumed to be normal.
The proof is analogous to that of Prop. \ref{obs:ex}.

\begin{proposition}\label{chan:ex}
Let $\mathcal{M}= \langle \hik, G \rangle$ be a normal multimeter and suppose two different channels, a unitary channel $\Uc$ and an extreme channel $\Ec$, can be programmed with vector states $\phi$ and $\phi'$, respectively. 
Then $\phi$ and $\phi'$ are orthogonal.
\end{proposition}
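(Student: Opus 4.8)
The plan is to transcribe, almost verbatim, the argument of Proposition~\ref{obs:ex}, substituting the commutation relation characterising sharp observables by the one characterising unitary channels and substituting ``observable induced by $\langle\K,\Zo,G,\eta\rangle$'' by ``channel induced by that measurement model''. First I would argue by contradiction: assume $\ip{\phi}{\phi'}\neq 0$ and write $\phi'=\alpha\phi+\beta\eta$ with $\ip{\phi}{\eta}=0$ and $\alpha,\beta\in\C\setminus\{0\}$, $|\alpha|^2+|\beta|^2=1$. As in the observable case this yields $W_{\phi'}=\alpha W_{\phi}+\beta W_{\eta}$, $W^*_{\phi'}=\bar\alpha W^*_{\phi}+\bar\beta W^*_{\eta}$ and $W_{\phi}^*W_{\eta}=0=W_{\eta}^*W_{\phi}$.

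Next I would set $R(B):=G^*(B\otimes\id_\K)G$ for $B\in\lh$, so that by \eqref{eq:stinespring-2} the programmed channels are $\Uc^*(B)=W_\phi^*R(B)W_\phi$ and $\Ec^*(B)=W_{\phi'}^*R(B)W_{\phi'}$. Since $\Uc^*$ is a unitary channel, the remark following \eqref{eq:stinespring-2} (an application of Prop.~\ref{prel:unitary}) gives $[R(B),W_\phi W_\phi^*]=0$ for all $B\in\lh$. Inserting $W_\phi^*W_\phi=\id_\hi$ and using this commutation together with $W_\phi^*W_\eta=0$ shows that the cross terms vanish, $W_\phi^*R(B)W_\eta=0$ and $W_\eta^*R(B)W_\phi=0$, exactly as in the proof of Prop.~\ref{obs:ex}. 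Expanding $\Ec^*(B)=W_{\phi'}^*R(B)W_{\phi'}$ then leaves only the diagonal terms, giving $\Ec^* = |\alpha|^2\,\Uc^* + |\beta|^2\,\Fc^*$, where $\Fc^*$ is the Heisenberg channel induced by the normal measurement model $\langle\K,\Zo,G,\eta\rangle$ (the pointer observable being irrelevant for channels, cf.\ \eqref{chan:repro}), hence a genuine channel.

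Finally, since $\beta\neq 0$ the weight $|\alpha|^2$ lies strictly between $0$ and $1$, so the identity $\Ec^*=|\alpha|^2\Uc^*+|\beta|^2\Fc^*$ is a proper convex decomposition of $\Ec^*$; the extremality of $\Ec$ then forces $\Uc=\Fc=\Ec$, contradicting $\Uc\neq\Ec$. Hence $\ip{\phi}{\phi'}\neq 0$ is impossible, i.e. $\phi$ and $\phi'$ are orthogonal. The only step that needs a word of care --- the ``main obstacle'', such as it is --- is verifying that the resulting convex combination is nontrivial rather than a tautology, but this is immediate from $\alpha\beta\neq 0$ and the hypothesis $\Uc\neq\Ec$; every other step is a routine copy of the observable case, using the unitary commutation relation in place of the sharpness commutation relation.
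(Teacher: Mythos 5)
Your proposal is correct and follows essentially the same route as the paper's own proof: decompose $\phi'=\alpha\phi+\beta\eta$, use the commutation relation $[G^*(B\otimes\id_\K)G,W_\phi W_\phi^*]=0$ from Prop.~\ref{prel:unitary} to kill the cross terms, and contradict the extremality of $\Ec$ via the resulting proper convex decomposition. Your explicit remark that the decomposition is nontrivial (and that the contradiction uses $\Uc\neq\Ec$) is a point the paper leaves implicit, but the argument is the same.
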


\begin{proof}
Assume $\ip{\phi}{\phi'}\neq 0$. 
Then $\phi'$ can be written as $\phi'=\alpha \phi + \beta \eta$, where $\ip{\phi}{\eta}=0$ and $\alpha,\beta \in \C\setminus\{0\}$ satisfy $|\alpha|^2 + |\beta|^2 = 1$. 
We recall from Prop. \ref{prel:unitary} that the unitarity of the channel $\Uc$ implies $[G^* (B\otimes \id_\hik)G, W_\phi W_\phi^*]=0$ for all $B\in \lh$.
It follows that 
\begin{equation}
W_{\phi}^*  G^*(B\otimes \id_\hik)G W_{\eta} =  0 \, ,
\end{equation}
and hence
\begin{eqnarray} 
\Ec^*(B) &=& W_{\phi'}^*  G^*(B \otimes \id_\hik)G  W_{\phi'} \nonumber \\
&=& |\alpha|^2 \Uc^*(B) + |\beta|^2 \Fc^*(B) \, , 
\end{eqnarray}
where $\Fc$ is a channel. 
This contradicts the extremality of $\Ec$, thus $\ip{\phi}{\phi'} \neq 0$ must be false.
\end{proof}

The next example shows that sheer extremality of the channels is not enough to imply orthogonality of the programming states.

\begin{example}
Choose two non-orthogonal unit vectors $\varphi,\varphi'\in\hi$. 
The corresponding complete state space contractions are extreme channels, and they can be programmed with a normal multimeter $\langle \hi, G_{\text{SWAP}}\rangle$ and non-orthogonal program states $\varphi$ and $\varphi'$, respectively.
Here $G_{\text{SWAP}}$ is the 'swap' unitary operator defined as
\begin{equation*}
G_{\text{SWAP}} \psi_1\otimes\psi_2 = \psi_2 \otimes \psi_1 \, .
\end{equation*}
\end{example}

The proof of the following result is the same as the proof of Prop. \ref{obs:basis} with obvious changes, so we omit it.

\begin{proposition} 
Let $\dim \K<\infty$.
Suppose one can program $\dim \K=n$ unitary channels $\Uc_1,\ldots,\Uc_n$, with a normal multimeter $\mathcal{M}$. 
Then $\mc{C}_\mc{M}$ is the convex hull of the set $\{\Uc_1,\ldots,\Uc_n \}$. 
\end{proposition}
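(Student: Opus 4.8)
The plan is to mirror the proof of Proposition~\ref{obs:basis}, replacing observables by channels and the pointer-coupled operators $\Zo'(X)$ by the operators $G^*(B\otimes\id_\hik)G$, $B\in\lh$. First I would record that, since unitary channels are extreme, Proposition~\ref{prop:ext} lets us assume each $\Uc_k$ is programmed by a \emph{vector} state $\phi_k$ within $\mc{M}$; then Proposition~\ref{chan:programming} forces $\ip{\phi_j}{\phi_k}=0$ for all $j\neq k$, so, as $\dim\hik=n$, the family $\{\phi_1,\dots,\phi_n\}$ is an orthonormal basis of $\hik$.

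The key computational step is the vanishing of the cross terms. For each fixed $k$, the unitarity of $\Uc_k$ together with Proposition~\ref{prel:unitary}(iii), applied to the Stinespring/measurement dilation \eqref{eq:stinespring-2} determined by $G$ and $\phi_k$, gives $[G^*(B\otimes\id_\hik)G,\,W_{\phi_k}W_{\phi_k}^*]=0$ for all $B\in\lh$, whence $W_{\phi_k}^*G^*(B\otimes\id_\hik)G\,W_\eta=0$ for every $\eta\perp\phi_k$ (exactly as in the proof of Proposition~\ref{chan:ex}). In particular $W_{\phi_j}^*G^*(B\otimes\id_\hik)G\,W_{\phi_k}=0$ whenever $j\neq k$. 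Writing an arbitrary unit vector $\psi\in\hik$ as $\psi=\sum_{k=1}^n\ip{\phi_k}{\psi}\phi_k$, so that $W_\psi=\sum_k\ip{\phi_k}{\psi}W_{\phi_k}$, the channel $\Ec$ programmed by $\psi$ satisfies
\begin{equation*}
\Ec^*(B)=W_\psi^*G^*(B\otimes\id_\hik)G\,W_\psi=\sum_{k=1}^n\mo{\ip{\phi_k}{\psi}}^2\,W_{\phi_k}^*G^*(B\otimes\id_\hik)G\,W_{\phi_k}=\sum_{k=1}^n\mo{\ip{\phi_k}{\psi}}^2\,\Uc_k^*(B),
\end{equation*}
a convex combination of $\Uc_1^*,\dots,\Uc_n^*$. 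For a general (mixed) probe state $\xi=\sum_i\lambda_iP[\psi_i]$ one uses that formula \eqref{chan:repro} is affine in $\xi$, so the resulting channel is the corresponding mixture of the channels attached to the pure states $P[\psi_i]$, hence again lies in the convex hull of $\{\Uc_1,\dots,\Uc_n\}$. This establishes $\mc{C}_\mc{M}\subseteq\operatorname{conv}\{\Uc_1,\dots,\Uc_n\}$.

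For the reverse inclusion I would again invoke the affinity of \eqref{chan:repro} in the probe state: given weights $\lambda_1,\dots,\lambda_n\ge 0$ with $\sum_k\lambda_k=1$, the mixed probe $\xi=\sum_k\lambda_kP[\phi_k]$ is programmed precisely to $\sum_k\lambda_k\Uc_k^*$, so every element of $\operatorname{conv}\{\Uc_1,\dots,\Uc_n\}$ belongs to $\mc{C}_\mc{M}$. Combining the two inclusions yields the claim.

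I do not expect a genuine obstacle here. The only point requiring a moment's care is that the cross-term identity $W_{\phi_j}^*G^*(B\otimes\id_\hik)G\,W_{\phi_k}=0$ must hold for \emph{every} pair $j\neq k$ simultaneously; this is automatic, since for a fixed pair it follows from the unitarity of the single channel $\Uc_k$ (equivalently $\Uc_j$) via Proposition~\ref{prel:unitary} together with $\phi_j\perp\phi_k$. Everything else is bookkeeping already carried out in Propositions~\ref{obs:ex}, \ref{chan:ex} and~\ref{obs:basis}.
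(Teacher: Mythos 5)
Your proof is correct and follows exactly the route the paper intends: the paper omits this proof, stating it is the same as that of Proposition~\ref{obs:basis} with the obvious changes, and your argument is precisely that translation (with the commutation relation from Proposition~\ref{prel:unitary} playing the role that Proposition~\ref{prel:sharp} plays for sharp observables). Your explicit check of the reverse inclusion via mixed probe states is a small but welcome addition that the paper's Proposition~\ref{obs:basis} proof leaves implicit.
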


\begin{example}\label{ex:push} An example of a normal multimeter in previous proposition is the "push-a-button" multimeter for which $G = \sum_{i=1}^n U_i \otimes P[\phi_i]$. Indeed we have $\mc U_i(\rho) = \ptr{G (\rho \otimes P[\phi_i]) G^*}$ for every $i=1,...,n$. We shall return to this multimeter in the last section \ref{sec:eff}.
\end{example}

%%%%%%%%%%%%%%%%%%%%%%%%%%%%%%%%%%%%
\subsection{Post-processing assisted programming of quantum observables}
%%%%%%%%%%%%%%%%%%%%%%%%%%%%%%%%%%%%

\emph{Post-processing} is classical information processing of the obtained measurement statistics: merging together, relabeling and weighting measurement outcomes in a stochastic manner. 
Mathematically post-processing is described by a classical-to-classical channel $\Lambda$ between probability measures on measurable spaces $(\Omega, \Sigma)$ and $(\Omega', \Sigma')$. 

\begin{figure}[h]
\includegraphics[width=1.0\textwidth]{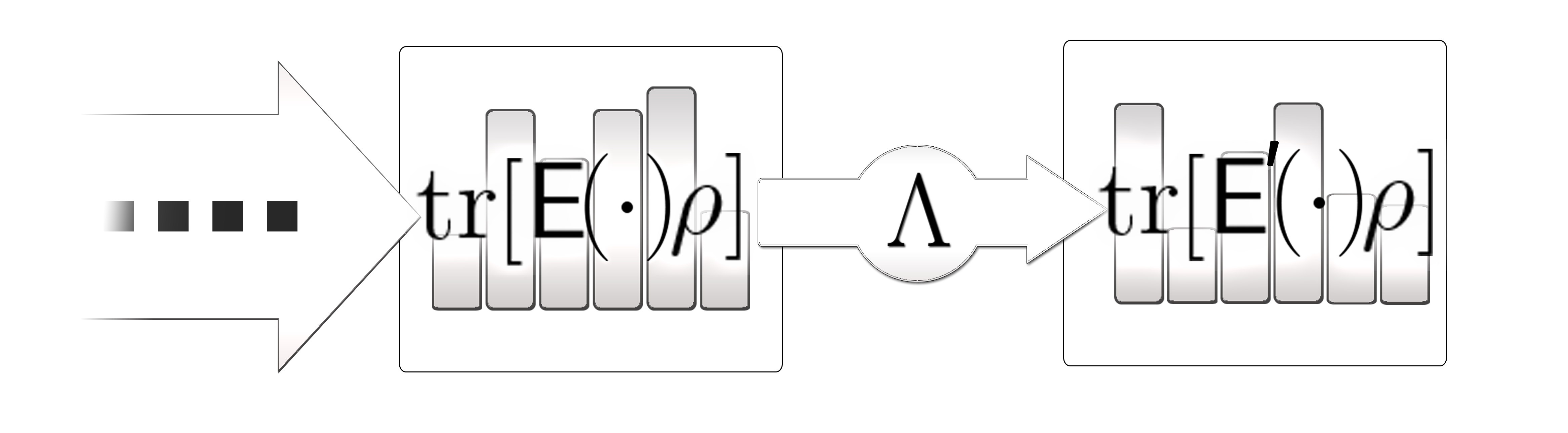} 
\caption{Post-processing is a classical manipulation of measured statistics described by a classical-to-classical channel $\Lambda$. }\label{fig:postpr}
\end{figure}

In particular this channel $\Lambda$ maps measurement statistics of the measured observable $\Eo:\Sigma\rightarrow\lh$ into those of another \emph{fuzzy} observable $\Eo':\Sigma'\rightarrow\lh$ via $\Lambda(\tr{\Eo(\cdot) \, \rho}) = \tr{\Eo'(\cdot) \, \rho}$, for all $\rho \in \sh$; see Fig. \ref{fig:postpr}. 
Under certain technical assumptions (see \cite{JePuVi08} for details), the action of the channel $\Lambda$ can be equivalently described by a \emph{Markov kernel}, i.e., a map $k: (\Omega,\Sigma') \rightarrow \left[0,1\right]$ such that
\begin{itemize}
\item[(i)] $k(x, \cdot)$ is a probability measure for every $x\in \Omega$,
\item[(ii)] $k(\cdot, X)$ is measurable for every $X \in \Sigma'$,
\end{itemize}
acting via formula $\Eo'(X) = \int k(x, X) \, d\Eo(x)$. 
The properties of post-processing have been studied for example in \cite{MaMu90a,BuDaKePeWe05,Heinonen05,JePu07}.
 
Post-processing can be taken as an additional component of a measurement model, so that a measurement model is understood as a 5-tuple $\langle \K, \Zo, \mc V, \xi,k \rangle$.  
If post-processing is included in programming, then the programmability of quantum observables changes. 
For instance, jointly measurable observables can be implemented without changing the probe state at all; we just fix a measurement model for their joint observable and post-process the measurement statistics by calculating the marginals.  

Surprisingly, different sharp observables may not require orthogonal programming states if post-processing is allowed. 
In the following we present an example of a measurement model capable of programming three sharp spin-observables, $\So_i(\pm) = \frac{1}{2}(\id_{\C^2} \pm \sigma_i)$, $i=1,2,3$, with non-orthogonal programming vectors. Here $\sigma_i$, $i=1,2,3$, are the conventional \emph{Pauli spin operators}.
The example is a more explicit version of the one constructed in \cite{ZiBu05}. 

\begin{example}
Let the apparatus Hilbert space be $\K= \C^4$ with orthonormal basis $\{|i \rangle \, | \, i=0,1,2,3\}$ and define the (non-orthogonal) programming vectors $\phi_i = \frac{1}{\sqrt{2}}(|0 \rangle+|i \rangle)$, $i=1,2,3$. One easily checks that $G = \sum_{j,k=0}^3 \frac{1}{2} \sigma_j \sigma_k \sigma_j \otimes |j \rangle \langle k |$, where $\sigma_0 = \id_{\C^2}$, is an unitary operator on $\C^2 \otimes \C^4$. 
A straightforward calculation shows that the effects of the measured observable in the model $\langle \K, \Zo, G, \phi_i\rangle$, where $\Zo(j)=|j\rangle\langle j |$, are given by $\Eo_i(j) = \frac{1}{4}\left( \id_{\C^2} + \sigma_j \sigma_i \sigma_j\right)$. Using three different post-processings we get
\begin{eqnarray}
\begin{array}{lr}
\So_1(+) = \Eo_1(0)+\Eo_1(1)\, , & \So_1(-) = \Eo_1(2)+\Eo_1(3) \\
\So_2(+) = \Eo_2(0)+\Eo_2(2)\, , & \So_2(-) = \Eo_2(1)+\Eo_2(3) \\
\So_3(+) = \Eo_3(0)+\Eo_3(3)\, , & \So_3(-) = \Eo_3(1)+\Eo_3(2)
\end{array}.
\end{eqnarray}

More generally, define a unit vector $\phi_{(\alpha, \vec{a})} \in \K$ to be $\phi_{(\alpha, \vec{a})} = \left(\alpha |0\rangle +\sum_{i=1}^3 a_i |i\rangle \right)$, where $\alpha \in \R$ and $\vec{a}=(a_1,a_2,a_3)\in\R^3$ satisfy the normalization $\alpha^2 + \sum_i a_i^2 = 1$. Programming of the above multimeter with $\phi_{(\alpha, \vec{a})}$ induces an observable $\Eo_{(\alpha, \vec{a})}(j) = \frac{1}{4}\left(\id_{\C^2} + 2 \alpha\, \sigma_j (\vec{a} \cdot \vec{\sigma}) \sigma_j\right),$ which is easily proven using the property $\{\sigma_j, \sigma_k\} = 2\delta_{jk} \id_{\C^2}$ for $j,k =1,2,3$. In particular the choise $\alpha = 1/\sqrt{2}, a_i = 1/\sqrt{6},$ $i=1,2,3$ induces a \emph{symmetric informationally complete} observable $\Eo(j)= \frac{1}{4}(\id_{\C^2} + \frac{1}{\sqrt{3}} \sum_i \sigma_j \sigma_i \sigma_j )$.
\end{example}

The previous example illustrates that some sharp observables can indeed be programmed with non-orthogonal programming vectors if post-processing is allowed. 
The explanation for this is simple: although post-processing should be viewed as classical processing of information, it effectively alters the pointer observable. 
To see this, let $\langle \K, \Zo, \mc V, \phi, k \rangle$ be a measurement model for $\Eo$, where $k: (\Omega, \Sigma) \rightarrow \left[0,1\right]$ is a Markov kernel. 
Using the formula \eqref{obs:repro}, linearity and continuity we have
\begin{eqnarray}
\Eo(X) &=& \int k(x, X) \, \ptr{ \mc V^*( \id_\hi \otimes d\Zo(x)) \, \id_\hi \otimes P[\phi]}\nonumber \\
&=&\ptr{ \mc V^*( \id_\hi \otimes \left(\int k(x, X)\, d\Zo(x)\right)) \id_\hi \otimes P[\phi]} \nonumber \\
&=& \ptr{ \mc V^*( \id_\hi \otimes \Zo'(X) ) \, \id_\hi \otimes P[\phi]},
\end{eqnarray} 
where  $\Zo'(X) = \int k(x, X) \, d\Zo(x)$. 
This shows that $\langle \K, \Zo,\mc  V, \phi, k \rangle$ and $\langle \K, \Zo', \mc V, \phi \rangle$ are equivalent as measurements of $\Eo$. 
But because the pointer is no longer kept fixed (if the Markov kernel is changed), the derivation in Prop. \ref{obs:programming} leading to the orthogonality of programming vectors does not hold anymore.

We note that if the pointer observable is altered arbitrarily, then all observables acting on a given Hilbert space can be implemented with a single a measurement model \cite{HeMiRe14}.
It is, however, left as a open question whether there exists a universal state-programmable multimeter capable of measuring every observable when post-processing is allowed.

Since the post-processing only affects the pointer observable, it is obvious that the programmability of channels does not change even if post-processing is allowed.

%%%%%%%%%%%%%%%%%%%%%%%%%%%%%%%%%%%%
\section{Efficiency of a Quantum Multimeter}\label{sec:eff}
%%%%%%%%%%%%%%%%%%%%%%%%%%%%%%%%%%%%

In this section we shortly study the effectiveness of quantum multimeters and give limits to efficient programming. Throughout the section we assume that post-processing is not allowed.

As discussed in Sec. \ref{sec:memo}, a normal measurement model $\langle \K, \Zo, G, \phi \rangle$ constitutes a special type of Naimark dilation of the measured observable $\Eo$, called a measurement dilation of $\Eo$.
Different measurement dilations of a given observable can be arranged by the dimensions of the auxiliary Hilbert spaces. 
This gives us a natural quantification of effectiveness: \emph{the smaller} the apparatus' dimension, \emph{the more efficient} the measurement.

It has been shown in \cite{Pellonpaa14} that the minimal dimension of the apparatus $\K$ corresponding to the minimal measurement dilation of a $N$-valued (sharp) observable is $\dim \K =N$ and such a minimal measurement can always be found. 
For readers convenience we present here a slightly different proposition proving this fact.

\begin{proposition}\label{prop:minimal} 
Let $\Ao$ be a $N$-valued sharp observable.
\begin{itemize}
\item[(a)] There exists a normal measurement model $\langle \K, \Zo, G, \phi \rangle$ of $\Ao$ with $\dim \hik  = N$.
\item[(b)] Any measurement model $\langle \K, \Zo, \mc V, \xi \rangle$ of $\Ao$ satisfies $\dim \hik  \geq N$.
\end{itemize}
\end{proposition}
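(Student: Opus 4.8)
The plan is to prove the two parts separately: (a) by exhibiting the canonical Lüders-type measurement dilation of $\Ao$, and (b) by a short argument that transfers all the relevant data onto the reduced probe states and then applies a single operator inequality for the pointer.

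For (a), list the $N$ outcomes of $\Ao$ as $x_1,\dots,x_N$. Sharpness together with $N$-valuedness means that $P_i:=\Ao(x_i)$ are pairwise orthogonal nonzero projections with $\sum_{i=1}^N P_i=\id_\hi$. Put $\hik=\C^N$ with orthonormal basis $e_1,\dots,e_N$, let $\Zo$ be the sharp pointer determined by $\Zo(x_i)=\kb{e_i}{e_i}$, and set $\phi=e_1$. Define a linear isometry $V\colon\hi\to\hi\otimes\hik$ by $V\varphi=\sum_{i=1}^N P_i\varphi\otimes e_i$, where mutual orthogonality of the $P_i$ is exactly what makes $V$ isometric, and choose $G$ to be any unitary on $\hi\otimes\hik$ with $GW_\phi=V$, i.e., any unitary extending the isometry $W_\phi\varphi\mapsto V\varphi$ from $W_\phi(\hi)$ to all of $\hi\otimes\hik$. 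Such a $G$ exists because $W_\phi(\hi)^{\perp}$ and $V(\hi)^{\perp}$ have equal dimension inside $\hi\otimes\hik$ (both equal $(N-1)\dim\hi$ when $\dim\hi<\infty$, and both are separable infinite-dimensional otherwise). By \eqref{eq:naimark-memo} the observable measured by $\langle\hik,\Zo,G,\phi\rangle$ is $X\mapsto W_\phi^* G^*(\id_\hi\otimes\Zo(X))G\,W_\phi=V^*(\id_\hi\otimes\Zo(X))V$, and on the atoms $V^*(\id_\hi\otimes\kb{e_i}{e_i})V=P_i$, so this normal measurement model reproduces $\Ao$ and has $\dim\hik=N$.

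For (b), fix an arbitrary measurement model $\langle\hik,\Zo,\mc V,\xi\rangle$ of $\Ao$. The idea is not to analyse $\mc V$ or $\xi$ at all, but to probe the apparatus with eigenstates of $\Ao$. Pick unit vectors $\varphi_i\in\ran\Ao(x_i)$; orthogonality of the projections $\Ao(x_i)$ gives $\tr{\Ao(x_i)P[\varphi_j]}=\delta_{ij}$. Inserting $\rho=P[\varphi_j]$ into the probability reproducibility condition and tracing out $\hi$ yields $\tr{\Zo(x_i)\,\sigma_j}=\delta_{ij}$, where $\sigma_j:=\hptr{\mc V(P[\varphi_j]\otimes\xi)}\in\sk$ is the reduced probe state after the interaction.

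It remains to extract the dimension bound. Write $Q_i:=\Zo(x_i)$ and $M_i:=\ker(\id_\hik-Q_i)$. Since $\sum_{i=1}^N Q_i=\Zo(\{x_1,\dots,x_N\})\le\id_\hik$, each $\id_\hik-Q_i$ is positive, so $\tr{(\id_\hik-Q_i)\sigma_i}=0$ forces $\overline{\ran\,\sigma_i}\subseteq M_i$; likewise $Q_i+Q_j\le\id_\hik$ forces $M_i\perp M_j$ whenever $i\ne j$. As each $M_i$ contains the (nonzero) support of the state $\sigma_i$, we get $\dim M_i\ge 1$, and since the $M_i$ are mutually orthogonal, $\dim\hik\ge\sum_{i=1}^N\dim M_i\ge N$, with the case $\dim\hik=\infty$ trivial. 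The one genuinely delicate point in all of this is precisely that step in (b) of getting from ``the measured observable equals the sharp observable $\Ao$'' to a statement about the pointer alone, despite $\mc V$ and $\xi$ being entirely unconstrained; the way around it is that sharpness of $\Ao$ hands us input states that the outcomes $x_i$ separate perfectly, after which the bare normalization $\sum_i\Zo(x_i)\le\id_\hik$ of the pointer does all the work. The dimension bookkeeping in (a) ensuring that the isometry extends to a unitary is routine once one separates the finite- and the (separable) infinite-dimensional cases.
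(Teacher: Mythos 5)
Your proof is correct. Part (b) is essentially the paper's own argument in different clothing: both probe the apparatus with the orthonormal eigenvectors $\varphi_i$ of the $\Ao(x_i)$, use probability reproducibility to get $\tr{\Zo(x_j)\,\sigma_i}=\delta_{ij}$ for the reduced post-interaction probe states, and then exhibit $N$ mutually orthogonal nonzero subspaces of $\K$. The paper does this by showing directly that $\zeta_i\zeta_j=\delta_{ij}\zeta_j^2$, so the supports of the $\zeta_i$ are orthogonal; you instead show that the eigenspaces $M_i=\ker(\id_\hik-\Zo(x_i))$ are mutually orthogonal (from $\Zo(x_i)+\Zo(x_j)\le\id_\hik$) and contain the respective supports. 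These are interchangeable. Part (a) is where you genuinely diverge: the paper writes down an explicit controlled-permutation coupling $G=\sum_j\Ao(j)\otimes G_j^*$, with $G_j$ the transposition unitaries on the pointer basis, and verifies by direct computation that $\langle\K,\Zo,G,\varphi_1\rangle$ measures $\Ao$; you instead build the Lüders--Naimark isometry $V\varphi=\sum_i P_i\varphi\otimes e_i$ and extend $W_\phi\varphi\mapsto V\varphi$ to a unitary by matching the dimensions of the orthocomplements of $W_\phi(\hi)$ and $V(\hi)$. Your route is shorter and more conceptual but only yields existence of $G$; the paper's buys an explicit, physically interpretable coupling that is reused later (in the multimeter of dimension $n\cdot\max\{N_1,\dots,N_n\}$). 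The one point you wave at rather than prove --- that $V(\hi)^\perp$ is infinite-dimensional when $\dim\hi=\infty$ and $N\ge 2$ --- is indeed routine (some $P_j\hi$ must be infinite-dimensional, and $P_j\hi\otimes e_k\subseteq V(\hi)^\perp$ for $k\ne j$), so there is no gap.
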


\begin{proof} 
\begin{itemize}
\item[(a)] 
Let $\K$ be a Hilbert space with $\dim \K = N$ and fix an orthonormal basis $\{\varphi_j \}_{j=1}^N$ for $\hik$. 
For each $j=1,...,N$, define a unitary operator $G_j$ on $\hik$ as
\begin{eqnarray}
G_j \varphi_k = \left\lbrace 
\begin{array}{ll}
\varphi_1, & \quad j=k \\
\varphi_k, & \quad j\neq k\neq 1\\
\varphi_j, & \quad k=1
\end{array} \right. .
\end{eqnarray}
Further, define $G = \sum_j \Ao(j) \otimes G_j^*$, which is a unitary operator on $\hi\otimes\hik$. 
By choosing the pointer observable as $\Zo(k) = |\varphi_k\rangle\langle\varphi_k|$ and the initial probe state as $\phi=\varphi_1$, we get
\begin{eqnarray}
W^*_{\phi} \, G^*\left( \id_\hi \otimes \Zo(k)\right) G \, W_{\phi} &=& \sum_j | \ip{G_j \varphi_k}{\varphi_1}|^2 \, \Ao(j)\nonumber \\
&=& \Ao(k) \, , 
\end{eqnarray}
proving that $\langle \K, \Zo, G, \phi \rangle$ is a measurement model of $\Ao$. 
\item[(b)]
Let $\langle \K, \Zo, \mc V, \xi \rangle$ be a measurement model of $\Ao$. 
We first note that the pointer observable $\Zo$ must be $N$-valued (though not necessarily sharp) and the probe state can be chosen pure, $\xi = |\phi\rangle\langle\phi|$.
Since $\Ao$ is sharp, there exist orthonormal vectors $\varphi_1,\ldots,\varphi_N$ such that $\Ao(j) \varphi_i = \delta_{ij} \varphi_j$. 
Then the probability reproducibility condition gives
\begin{eqnarray}
\delta_{ij} = \tr{\Ao(j)\, P[\varphi_i]} &=& \tr{\id_\hi \otimes \Zo(j) \, \mc V (P[\varphi_i]\otimes \xi) } \nonumber \\
&=&\tr{\Zo(j) \, \hptr{ \mc V(P[\varphi_i]\otimes\xi)}} \nonumber \\
&=&\tr{ \Zo(j) \, \zeta_i},
\end{eqnarray}
where $\zeta_i = \hptr{\mc V(P[\varphi_i]\otimes \xi)}$.
It follows that $\Zo(j) \zeta_i = \delta_{ij} \, \zeta_j$, and therefore
\begin{eqnarray}
\zeta_i  \, \zeta_j = \zeta_i  \, \Zo(j) \zeta_j = (\Zo(j) \zeta_i)^* \, \zeta_j = \delta_{ij}  \, \zeta_j^2,
\end{eqnarray}
meaning that the supports of the operators $\zeta_i$ and $\zeta_j$ are orthogonal as subspaces whenever $i\neq j$. 
We thus have $N$ orthogonal subspaces in $\hik$, so $\dim\hik \geq N$. 
\end{itemize}
\end{proof}

Obviously this sets further limitations to the programming scenarios of sharp observables in addition to those seen in \ref{sec:probs}. We summarize these observations in the following proposition.

\begin{proposition}
Programming of different sharp observables $\Ao_1,\ldots,\Ao_n$, with $N_i$ values, respectively, can be done only if the apparatus $\K$ satisfies $\dim \K \geq \max \{ n, N_1,...,N_n \}.$
\end{proposition}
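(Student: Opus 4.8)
The plan is to combine two facts that have already been established in the excerpt: the orthogonality of programming vectors for distinct sharp observables (Proposition \ref{obs:programming}), and the dimension lower bound for any measurement model of an $N$-valued sharp observable (Proposition \ref{prop:minimal}(b)). Each fact yields a necessary condition on $\dim\K$, and since both conditions must hold simultaneously, the bound $\dim\K\ge\max\{n,N_1,\dots,N_n\}$ follows immediately.

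First I would invoke Proposition \ref{obs:programming}: if a multimeter $\mc M=\langle\K,\Zo,\Vc\rangle$ programs the distinct sharp observables $\Ao_1,\dots,\Ao_n$, then by Proposition \ref{prop:ext} each $\Ao_i$ (being sharp, hence extreme) can be programmed with a pure probe state $\phi_i$, and Proposition \ref{obs:programming} forces $\langle\phi_i\,|\,\phi_j\rangle=0$ for $i\ne j$. Thus $\phi_1,\dots,\phi_n$ are $n$ mutually orthogonal unit vectors in $\K$, so $\dim\K\ge n$. Second, for each fixed $i$, the triple $\langle\K,\Zo,\Vc,P[\phi_i]\rangle$ is itself a measurement model of the $N_i$-valued sharp observable $\Ao_i$, so Proposition \ref{prop:minimal}(b) gives $\dim\K\ge N_i$. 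Combining, $\dim\K\ge\max\{n,N_1,\dots,N_n\}$.

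There is essentially no obstacle here — the proposition is a straightforward corollary packaging the two earlier results. The only point requiring a word of care is the reduction to pure probe states: one must note that the orthogonality statement of Proposition \ref{obs:programming} is phrased for vector (pure) states, and that Proposition \ref{prop:ext} legitimately supplies such states because sharp observables are extreme. One should also remark that the same apparatus Hilbert space $\K$ and interaction $\Vc$ are used throughout — only the probe state is varied — so that the dimension bound from Proposition \ref{prop:minimal}(b) applies to the common $\K$ for every index $i$ simultaneously. With these remarks in place the argument is complete in a few lines, and I would present it as such rather than reproving either ingredient.

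Since the paper states this result without proof right after Proposition \ref{prop:minimal}, and the two ingredients are exactly the ones just cited, I expect the "proof" to be at most a short paragraph; indeed the excerpt itself says "Obviously this sets further limitations," signalling that the authors regard it as immediate. My plan matches that expectation: cite Proposition \ref{obs:programming} (with Proposition \ref{prop:ext}) for the $\dim\K\ge n$ half, cite Proposition \ref{prop:minimal}(b) applied to each $\Ao_i$ for the $\dim\K\ge N_i$ half, and take the maximum.
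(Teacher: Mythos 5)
Your proposal is correct and is exactly the argument the paper intends: the authors state the proposition without proof as an immediate combination of Proposition \ref{obs:programming} (orthogonality of the $n$ programming vectors, after reducing to pure probe states via Proposition \ref{prop:ext}) and Proposition \ref{prop:minimal}(b) (each $\Ao_i$ forces $\dim\K\ge N_i$). Your added remarks about the reduction to vector states and the common apparatus $\K$ are the right points of care, and nothing further is needed.
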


\begin{example}
Let $\Ao_1$ and $\Ao_2$ be sharp spin-observables. 
Since the effects $\Ao_1(\pm)$ and $\Ao_2(\pm)$ are rank-1 projections in a two dimensional Hilbert space, there exists a unitary operator $R$ on $\C^2$ such that  $\Ao_2 =  R^* \Ao_1 R$.
Now let $\phi_1$ and $\phi_2$ be any pair of orthonormal vectors and define a unitary operator $G = G_{\text{SWAP}} (\id_{\C^2} \otimes P[\phi_1] + R \otimes P[\phi_2])$. 
One easily verifies that $\langle \C^2, \Ao_1, G\rangle$ is a multimeter that can be programmed to measure $\Ao_1$ and $\Ao_2$ with vectors $\phi_1$ and $\phi_2$, respectively.
This shows that at least in this particular case the minimal programming dimension can indeed be reached. 
\end{example}
 \begin{figure}[h]\label{fig:push}
\includegraphics[width=1.0\textwidth]{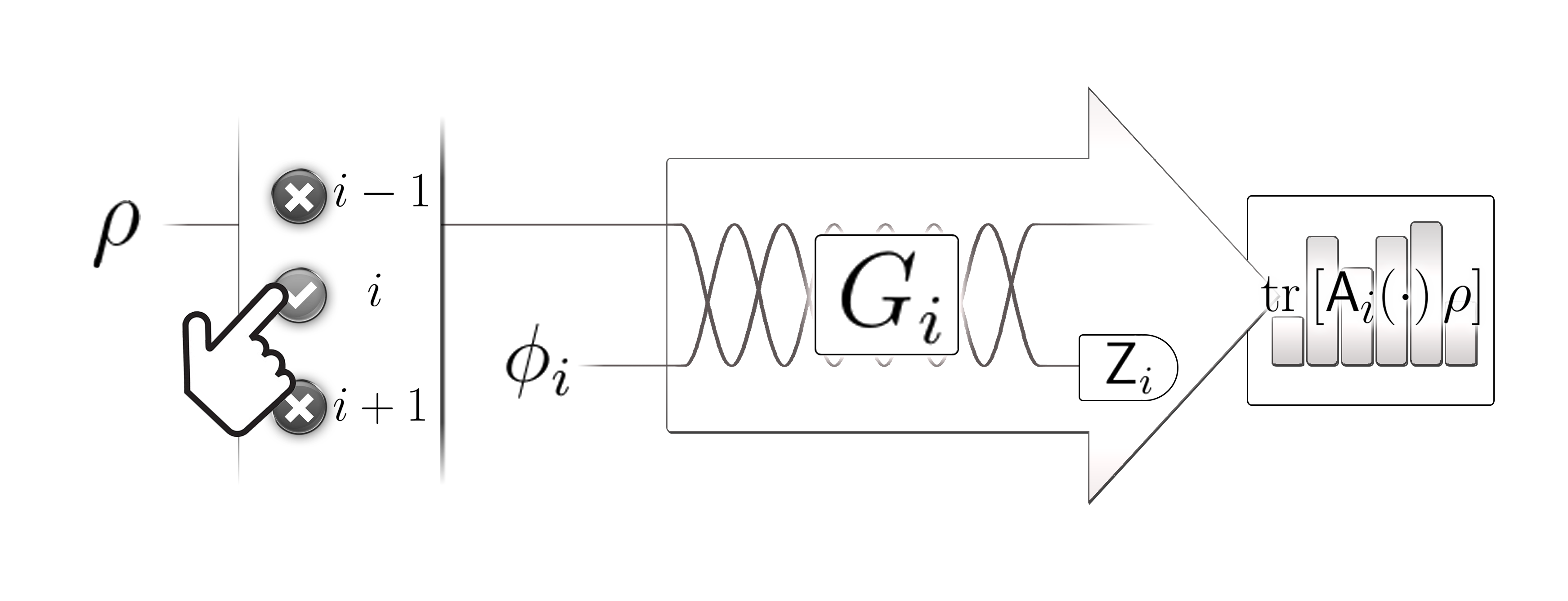} 
\caption{Illustration of the "push-a-button" programming protocol. Minimal measurement devices for each observable are bundled together and one simply chooses beforehand which one to use.}
\end{figure}
An inefficient way of programming $n$ observables is to have minimal measurements $\langle \K_j, \Zo_j, G_j, \phi_j\rangle$ for each observable $\Ao_j$ individually and simply to select beforehand which one should be performed (see Fig. \ref{fig:push}) .
Mathematically this ''push-a-button'' protocol is described by a programmable multimeter $\langle \K, \Zo, G \rangle$ and programming vectors $\Phi_i, \ i=1,...n$ where 
\begin{eqnarray*}
\K=\otimes_{j=1}^n \K_j  \otimes \C^n \\
\Zo=\left( \otimes_{j=1}^n \Zo_{j} \right)\otimes \id_{\C^n} \\
G=\left( \sum_{i=1}^n G_i \otimes_{j\neq i}^n \id_j   \otimes |i\rangle\langle i | \right) \\
\Phi_i = \left( \otimes_{j=1}^n \phi_j \right) \otimes |i\rangle
\end{eqnarray*}

The size of this "push-a-button" -multimeter is $\dim \K = n \cdot \Pi_{i=1}^n N_i$. We take this trivial programming scenario, which nevertheless can always be performed for any finite set of (sharp) observables, as the upper bound of effectiveness of a multimeter. 
We conclude that in every interesting programming protocol capable of programming sharp observables $\Ao_i$, $i=1,...,n$, the size of the apparatus $\dim \K$ satisfies the bounds
\begin{eqnarray}
\max\{n,N_1,...,N_n\} \leq \dim \K \leq n \cdot \Pi_{i=1}^n N_i.
\end{eqnarray}

\begin{proposition}
Let $\Ao_1,\ldots,\Ao_n$ be sharp observables with $N_1,\ldots,N_n$ values, respectively. 
There exists a normal multimeter $\langle \K, \Zo, G \rangle$ with $\dim \K=n \cdot \max\{N_1,...,N_n\}$ capable of realizing each of the $\Ao_i$ $i=1,...,n$.
\end{proposition}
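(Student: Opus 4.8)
The plan is to compress the ``push-a-button'' protocol: instead of tensoring together the individual minimal workspaces $\K_1,\ldots,\K_n$ (which produces the bad bound $n\cdot\Pi_i N_i$), I would use a \emph{single} workspace $\K_0$ of dimension $N:=\max\{N_1,\ldots,N_n\}$ together with an $n$-dimensional ``selector'' register $\C^n$. The point is that the minimal measurement of each $\Ao_i$ (supplied by Proposition \ref{prop:minimal}(a)) needs a workspace of dimension only $N_i\le N$, so all $n$ of these minimal measurements fit inside one common $N$-dimensional space, and — crucially — can be made to share one common sharp pointer observable; the selector register merely records which program has been loaded.

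Concretely, first fix $\K_0$ with $\dim\K_0=N$ and an orthonormal basis $e_1,\ldots,e_N$. For each $i$ apply Proposition \ref{prop:minimal}(a) to get a normal measurement model $\langle\hik_i,\Zo_i,G_i,\phi_i\rangle$ of $\Ao_i$ with $\dim\hik_i=N_i$; recall that in that construction $\Zo_i(k)=\kb{\varphi^{(i)}_k}{\varphi^{(i)}_k}$ for an orthonormal basis $\{\varphi^{(i)}_k\}_{k=1}^{N_i}$ of $\hik_i$ and $\phi_i=\varphi^{(i)}_1$. I would identify $\hik_i$ with the subspace $\span\{e_1,\ldots,e_{N_i}\}\subseteq\K_0$ via $\varphi^{(i)}_k\mapsto e_k$, and extend the unitary $G_i$ on $\hi\otimes\hik_i$ to a unitary $\tilde G_i$ on $\hi\otimes\K_0$ by letting it act as the identity on $\hi\otimes\span\{e_{N_i+1},\ldots,e_N\}$; note $\tilde G_i$ then leaves $\hi\otimes\hik_i$ invariant.

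Next I would define the multimeter by $\hik=\K_0\otimes\C^n$ (so $\dim\hik=nN$, as required), pointer $\Zo(k)=\kb{e_k}{e_k}\otimes\id_{\C^n}$ for $k=1,\ldots,N$ (sharp and $N$-valued), and $G=\sum_{i=1}^n\tilde G_i\otimes\kb{i}{i}$ with respect to an orthonormal basis $\ket{1},\ldots,\ket{n}$ of $\C^n$; being block diagonal with unitary blocks, $G$ is unitary, so $\langle\hik,\Zo,G\rangle$ is a normal multimeter of the stated size. Programming it with $\Phi_i=e_1\otimes\ket{i}$ and computing the induced observable via \eqref{eq:naimark-memo}, everything collapses onto the $i$-th block: one is reduced to $W_{e_1}^*\,\tilde G_i^*(\id_\hi\otimes\kb{e_k}{e_k})\,\tilde G_i\,W_{e_1}$, where now $W_{e_1}\colon\hi\to\hi\otimes\K_0$. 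For $k\le N_i$ this equals $W_{\varphi^{(i)}_1}^*G_i^*(\id_\hi\otimes\kb{\varphi^{(i)}_k}{\varphi^{(i)}_k})G_i W_{\varphi^{(i)}_1}=\Ao_i(k)$ by the defining property of the minimal model and the invariance of $\hi\otimes\hik_i$ under $\tilde G_i$; for $k>N_i$ it vanishes because $\tilde G_i W_{e_1}\psi\in\hi\otimes\hik_i$ while $e_k\perp\hik_i$. Hence the observable programmed by $\Phi_i$ has effects $\Ao_i(1),\ldots,\Ao_i(N_i),0,\ldots,0$, i.e., it is exactly $\Ao_i$.

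I do not expect a real obstacle here; the only thing requiring care is the bookkeeping of the embeddings $\hik_i\hookrightarrow\K_0$ and of the trivial extensions $\tilde G_i$, so that $G$ is genuinely unitary and $\Zo$ is a single sharp observable common to all the programs. As a sanity check, the program states $\Phi_i=e_1\otimes\ket{i}$ come out mutually orthogonal, consistent with Proposition \ref{obs:programming}.
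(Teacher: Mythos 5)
Your construction is correct and is essentially the paper's own proof: the paper likewise takes $\K=\K_1\otimes\C^n$ with $\dim\K_1=\max_i N_i$, couples via $G=\sum_{j,l}\Ao_l(j)\otimes G_j^*\otimes P[\eta_l]$ (which, after extending each $\Ao_l$ by zero effects, coincides with your block-diagonal $\sum_i\tilde G_i\otimes\kb{i}{i}$), uses the pointer $\kb{\phi_k}{\phi_k}\otimes\id$, and programs with $\phi_1\otimes\eta_i$. The only cosmetic difference is that you extend the minimal unitaries by the identity on the unused dimensions, while the paper extends the observables by zero effects and reuses the formula from Proposition \ref{prop:minimal}(a); these yield the same operator.
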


\begin{proof}
Denote $d= \max\{N_1,...,N_n\}$. Let $\K = \K_1 \otimes \K_2$ with $\dim \K_1=d$ and $\dim \K_2=n$, and fix some orthonormal bases $\{\phi_k\}_{k=1}^d \subset \K_1$ and $\{\eta_l \}_{l=1}^n  \subset \K_2$. Define the unitaries $G_j: \K_1 \rightarrow \K_1$, $j=1,...,d$, as in Prop. \ref{prop:minimal}.
Extend every observable $\Ao_l$ to be ''$d$-valued'' by adding zero-effects, if necessary, and define the coupling to be 
\begin{equation}
G= \sum_{j=1}^d \sum_{l=1}^n   \Ao_l(j) \otimes G_j^* \otimes P[\eta_l] \, .
\end{equation}
Finally, choose the pointer observable to be $\Zo(k)= |\phi_k\rangle\langle\phi_k| \otimes \id_{\K_2}$. Then 
\begin{eqnarray}
&& \ptr{ G^*\left( \id_\hi \otimes \Zo(k)\right) G \, \id_\hi \otimes P[\phi_1 \otimes \eta_i] } \nonumber \\
&=& \sum_{j,j',l,l'} \Ao_l(j) \Ao_{l'}(j')\, \tr{|G_j \phi_k \rangle\langle G_{j'} \phi_k| \, P[\phi_1]} \,  \tr{ P[\eta_{l}] \, P[\eta_{l'}] \, P[\eta_i]} \nonumber \\
&=& \Ao_i(k).
\end{eqnarray}
This shows, that if one chooses $\Phi_i=\phi_1\otimes \eta_i$, then for every $i=1,...,n$ $\langle \K, \Zo, G, \Phi_i \rangle$ is an $\Ao_i$-measurement.
\end{proof}

Following the methods introduced above one can also study the effectiveness of channel programming. Since one can induce an arbitrary unitary channel $\mc U^* (B) = U^* B U$ from a normal measurement model $\langle \K, G, \phi\rangle$ using a unitary coupling $G = U \otimes \id_\K$, we note that $\min \dim \K =1$.
Interestingly enough, from the point of view of deterministic programming this means that the ''push-a-button'' realization of $n$ unitary channels (see Ex. \ref{ex:push}) is actually the most efficient protocol one can have. This implies that when engineering unitary quantum gate arrays, it's sufficient to build every gate individually and bundle those into an array respecting the ''push-a-button'' protocol.

%%%%%%%%%%%%%%%%%%%%%%%%%%%%%%%%%%%%
\section{Conclusions and discussion}\label{sec:discussion}
%%%%%%%%%%%%%%%%%%%%%%%%%%%%%%%%%%%%

The fundamental limitations on the deterministic programming of sharp observables or unitary channels are essentially the same.
We have shown how these two scenarios are connected and can be put into a common framework by using the general theory of quantum measurements. In particular, we have generalized the orthogonality result from Nielsen and Chuang, in which the programmable multimeter/gate array is described by a unitary channel, to a completely general case in which the channel is arbitrary. We emphasize, that the original no-go theorem of perfectly precise programmable quantum gate array by Nielsen and Chuang would hold also in this case via dilating the channel into a unitary one. The drawback in this dilation approach is that it doesn't address the orthogonality in the context of original space. Our results clarify this aspect.

This article deals with qualitative aspects of quantum programming, namely the orthogonality of the programming vectors of sharp observables and unitary channels. A quantitative study on the connection between the distance of programming states and the distinction of programmed devices will be the subject of a separate investigation. 
It would also be interesting to answer whether or not a deterministically programmable multimeter capable of measuring every quantum observable is possible, if post-processing is allowed.

\section*{Acknowledgements}

This work has been supported by the Academy of Finland (grant no. 138135). M.T. acknowledges financial support from the University of Turku Graduate School (UTUGS).
The authors are grateful to Jussi Schultz for his comments on an earlier version of this manuscript. M.T. would also like to acknowledge Erkka Haapasalo for many useful discussions.

\end{document}